\newtheorem{definition}{Definition}[section]
\newtheorem{lemma}[definition]{Lemma}
\newtheorem{theorem}[definition]{Theorem}
\def\squareforqed{\hbox{\rlap{$\sqcap$}$\sqcup$}}
\def\qed{\ifmmode\squareforqed\else{\unskip\nobreak\hfil
\penalty50\hskip1em\null\nobreak\hfil\squareforqed
\parfillskip=0pt\finalhyphendemerits=0\endgraf}\fi}
\def\endenv{\ifmmode\;\else{\unskip\nobreak\hfil
\penalty50\hskip1em\null\nobreak\hfil\;
\parfillskip=0pt\finalhyphendemerits=0\endgraf}\fi}
\newenvironment{proof}{\noindent \textbf{{Proof~} }}{\qed}
\mathchardef\ordinarycolon\mathcode`\:
\def\vcentcolon{\mathrel{\mathop\ordinarycolon}}
\newcommand{\nc}{\newcommand}
\nc{\rnc}{\renewcommand} \nc{\beq}{\begin{equation}}
\nc{\eeq}{{\end{equation}}} \nc{\bea}{\begin{eqnarray}}
\nc{\eea}{\end{eqnarray}} \nc{\beqa}{\begin{eqnarray}}
\nc{\eeqa}{\end{eqnarray}} \nc{\lbar}[1]{\overline{#1}}
\nc{\bra}[1]{\langle#1|} \nc{\ket}[1]{|#1\rangle}
\nc{\ketbra}[2]{|#1\rangle\!\langle#2|}
\nc{\braket}[2]{\langle#1|#2\rangle} \nc{\proj}[1]{|
#1\rangle\!\langle #1 |} \nc{\avg}[1]{\langle#1\rangle}
\rnc{\max}{\operatorname{max}} \nc{\rank}{\operatorname{rank}}
\nc{\conv}{\operatorname{conv}}
\nc{\smfrac}[2]{\mbox{$\frac{#1}{#2}$}} \nc{\Tr}{\operatorname{Tr}}
\nc{\ox}{\otimes} \nc{\dg}{\dagger} \nc{\dn}{\downarrow}
\nc{\cA}{{\cal A}} \nc{\cB}{{\cal B}} \nc{\cC}{{\cal C}}
\nc{\cD}{{\cal D}} \nc{\cE}{{\cal E}} \nc{\cF}{{\cal F}}
\nc{\cG}{{\cal G}} \nc{\cH}{{\cal H}} \nc{\cI}{{\cal I}}
\nc{\cJ}{{\cal J}} \nc{\cK}{{\cal K}} \nc{\cL}{{\cal L}}
\nc{\cM}{{\cal M}} \nc{\cN}{{\cal N}} \nc{\cO}{{\cal O}}
\nc{\cP}{{\cal P}} \nc{\cR}{{\cal R}} \nc{\cS}{{\cal S}}
\nc{\cT}{{\cal T}} \nc{\cU}{{\cal U}} \nc{\cV}{{\cal V}}
\nc{\cX}{{\cal X}} \nc{\cW}{{\cal W}} \nc{\cZ}{{\cal Z}}
\nc{\csupp}{{\operatorname{csupp}}}
\nc{\qsupp}{{\operatorname{qsupp}}} \nc{\rar}{\rightarrow}
\nc{\lrar}{\longrightarrow} \nc{\poly}{\operatorname{poly}}
\nc{\polylog}{\operatorname{polylog}} \nc{\Lip}{\operatorname{Lip}}
\nc{\supp}{{\operatorname{supp}}}
\def\>{\rangle}
\def\<{\langle}
\def\a{\alpha}
\def\b{\beta}
\def\d{\delta}
\def\e{\epsilon}
\def\l{\lambda}
\def\r{\rho}
\def\s{\sigma}
\def\ph{\varphi}
\def\ps{\psi}
\def\Ph{\Phi}
\def\O{\Omega}
\nc{\glneq}{{\raisebox{0.6ex}{$>$}  \hspace*{-1.8ex} \raisebox{-0.6ex}{$<$}}}
\nc{\gleq}{{\raisebox{0.6ex}{$\geq$}\hspace*{-1.8ex} \raisebox{-0.6ex}{$\leq$}}}
\nc{\RR}{{{\mathbb R}}}
\nc{\CC}{{{\mathbb C}}}
\nc{\FF}{{{\mathbb F}}}
\nc{\HH}{{{\mathbb H}}}
\nc{\NN}{{{\mathbb N}}}
\nc{\ZZ}{{{\mathbb Z}}}
\nc{\PP}{{{\mathbb P}}}
\nc{\QQ}{{{\mathbb Q}}}
\nc{\UU}{{{\mathbb U}}}
\nc{\WW}{{{\mathbb W}}}
\nc{\EE}{{{\mathbb E}}}
\rnc{\SS}{{{\mathbb S}}}
\nc{\id}{{\operatorname{id}}}
\nc{\vholder}[1]{\rule{0pt}{#1}}
\nc{\ob}[1]{#1}
\def\beq{\begin {equation}}
\def\eeq{\end {equation}}
\nc{\eq}[1]{Eq.~(\ref{eq:#1})} \nc{\eqs}[2]{Eqs.~(\ref{eq:#1}) and
(\ref{eq:#2})}
\nc{\eqn}[1]{Eq.~(\ref{eqn:#1})}
\nc{\eqns}[2]{Eqs.~(\ref{eqn:#1}) and (\ref{eqn:#2})}
\nc{\Hm}{{H^{\min}}} \nc{\Hpm}{{H_p^{\min}}} \rnc{\log}{\ln}
\nc{\Wg}{\operatorname{Wg}}
\begin{document}

\title{{\Large Counterexamples to the maximal $\mathbf{p}$-norm\\
               multiplicativity conjecture for all $\mathbf{p > 1}$}}

\author{Patrick Hayden}
 \email{patrick@cs.mcgill.ca}
 \affiliation{
    School of Computer Science,
    McGill University,
    Montreal, Quebec, H3A 2A7, Canada
    }

\author{Andreas Winter}
 \email{a.j.winter@bris.ac.uk}
  \affiliation{
     Department of Mathematics,
    University of Bristol,
    University Walk, Bristol BS8 1TW, U.~K.
    }
  \affiliation{
     Centre for Quantum Technologies,
     National University of Singapore,
     2 Science Drive 3, Singapore 117542}

\date{30 July 2008}

\begin{abstract}
For all $p > 1$, we demonstrate the existence of quantum channels
with non-multiplicative maximal output $p$-norms. Equivalently, for
all $p
>1$, the minimum output R\'enyi entropy of order $p$ of a quantum
channel is not additive. The violations found are large; in all
cases, the minimum output R\'enyi entropy of order $p$ for a product
channel need not be significantly greater than the minimum output
entropy of its individual factors. Since $p=1$ corresponds to the
von Neumann entropy, these counterexamples demonstrate that if the
additivity conjecture of quantum information theory is true, it
cannot be proved as a consequence of any channel-independent
guarantee of maximal $p$-norm multiplicativity. We also show that a
class of channels previously studied in the context of approximate
encryption lead to counterexamples for all $p > 2$.
\end{abstract}

\maketitle

\parskip .75ex


\section{Introduction} \label{sec:intro}

The oldest problem of quantum information theory is arguably to
determine the capacity of a quantum-mechanical communications
channel for carrying information, specifically \emph{classical}
bits of information. (Until the 1990's it would have been
unnecessary to add that additional qualification, but today the
field is equally concerned with other forms of information like
\emph{qubits} and \emph{ebits} that are fundamentally
quantum-mechanical.) The classical capacity problem long predates
the invention of quantum source coding~\cite{Schumacher95,JozsaS94}
and was of concern to the founders of information theory
themselves~\cite{P73}.
The first major result on the problem came with the resolution of a
conjecture of Gordon's~\cite{G64} by Alexander Holevo in 1973, when
he published the first proof~\cite{H73} that the
maximum amount of information that can be extracted from an ensemble
of states $\s_i$ occurring with probabilities $p_i$ is bounded above
by
\begin{equation}
\chi(\{p_i,\s_i\})
 = H\Big( \sum_i p_i \s_i \Big) - \sum_i p_i H(\s_i),
\end{equation}
where $H(\s) = -\Tr \s \ln \s$ is the von Neumann entropy of the
density operator $\s$. For a quantum channel $\cN$, one can then
define the Holevo capacity
\begin{equation}
\chi(\cN) = \max_{\{p_i,\r_i\}} \chi( \{ p_i, \cN(\r_i) \} ),
\end{equation}
where the maximization is over all ensembles of input states.
Writing $C(\cN)$ for the classical capacity of the channel $\cN$,
this leads easily to an upper bound of
\begin{equation} \label{eqn:c.chi}
C(\cN) \leq \lim_{n \rar \infty} \frac{1}{n} \chi(\cN^{\ox n}).
\end{equation}
It then took more than two decades for further substantial progress
to be made on the problem, but in 1996, building on recent
advances~\cite{HausladenJSWW96}, Holevo~\cite{H98} and
Schumacher-Westmoreland~\cite{SW97} managed to show that the upper
bound in Eq.~(\ref{eqn:c.chi}) is actually achieved. This was a
resolution of sorts to the capacity problem, but the limit in the
equation makes it in practice extremely difficult to evaluate. If
the codewords used for data transmission are restricted such that
they are not entangled across multiple uses of the channel, however,
the resulting \emph{product state capacity} $C_{1\infty}(\cN)$ has
the simpler expression
\begin{equation}
C_{1\infty}(\cN) = \chi(\cN).
\end{equation}
The additivity conjecture for the Holevo capacity asserts that
for all channels $\cN_1$ and $\cN_2$,
\begin{equation}
\chi(\cN_1 \ox \cN_2) = \chi(\cN_1) + \chi(\cN_2).
\end{equation}
This would imply, in particular, that $C_{1\infty}(\cN) = C(\cN)$,
or that entangled codewords do not increase the classical capacity
of a quantum channel.

In 2003, Peter Shor~\cite{Sh04}, building on several previously
established
connections~\cite{Pomeransky03,AudenaertB04,MatsumotoSW04},
demonstrated that the additivity of the Holevo capacity, the
additivity of the entanglement of
formation~\cite{BennettDSW96,HaydenHT01,VidalDC02,MatsumotoY04} and
the superadditivity of the entanglement of formation~\cite{VW01} are
all equivalent to another conjecture of Shor's which is particularly
simple to express mathematically, known as the \emph{minimum output
entropy conjecture}~\cite{KingR01}.  For a channel $\cN$, define
\begin{equation} \label{eqn:min.entropy}
\Hm(\cN) = \min_{\ket{\ph}} H( \cN( \ph ) ),
\end{equation}
where the minimization is over all pure input states $\ket{\ph}$.
The minimum output entropy conjecture asserts that for all channels
$\cN_1$ and $\cN_2$,
\begin{equation} \label{eqn:add}
\Hm(\cN_1 \ox \cN_2) = \Hm(\cN_1) + \Hm(\cN_2).
\end{equation}

There has been a great deal of previous work on these conjectures,
particularly inconclusive numerical searches for counterexamples,
necessarily in low dimension, at Caltech, IBM, in Braunschweig
(IMaPh) and Tokyo (ERATO)~\cite{OsawaN01}, as well as proofs of many
special cases. For example, the minimum output entropy conjecture
has been shown to hold if one of the channels is the identity
channel~\cite{AmosovHW00,AmosovH01}, a unital qubit
channel~\cite{K01}, a generalized depolarizing
channel~\cite{FujiwaraH02,K03} or an entanglement-breaking
channel~\cite{Holevo98a,K01b,Sh02}. In addition, the weak additivity
conjecture was confirmed for generalized dephasing
channels~\cite{DevetakS05}, the conjugates of all these
channels~\cite{KingMNR05} and some other special classes of
channels~\cite{Cortese04,MatsumotoY04,DattaHS04,Fukuda05}. Further
evidence for qubit channels was supplied in~\cite{KingR01}. This
list is by no means exhaustive. The reader is directed to Holevo's
reviews for a detailed account of the history of the additivity
problem~\cite{Holevo04,Holevo07}.

For the past several years, the most commonly used strategy for
proving these partial results has been to demonstrate the
multiplicativity of maximal $p$-norms of quantum channels for $p$
approaching 1~\cite{AmosovHW00}. For a quantum channel $\cN$ and $p
> 1$, define the maximal $p$-norm of $\cN$ to be
\begin{equation}
\nu_p(\cN) = \sup\Big\{ \big\| \cN(\r) \big\|_p \, ; \r \geq 0, \,
\Tr \r =1 \Big\}.
\end{equation}
In the equation, $\| \s \|_p = \big( \Tr |\s|^p \big)^{1/p}$. The
\emph{maximal $p$-norm multiplicativity
conjecture}~\cite{AmosovHW00} asserts that for all quantum channels
$\cN_1$ and $\cN_2$,
\begin{equation} \label{eqn:p.norm.conjecture}
\nu_p(\cN_1 \ox \cN_2) = \nu_p(\cN_1)\nu_p(\cN_2).
\end{equation}
This can be re-expressed in an equivalent form more convenient to us
using R\'enyi entropies. Define the R\'enyi entropy of order $p$ to
be
\begin{equation}
H_p(\rho) = \frac{1}{1-p} \log \Tr \rho^p
\end{equation}
for $p>0$, $p \neq 1$. Since $\lim_{p \downarrow 1} H_p(\rho) =
H(\r)$, we will also define $H_1(\r)$ to be $H(\r)$. All these
entropies have the property that they are 0 for pure states and
achieve their maximum value of the logarithm of the dimension on
maximally mixed states. Define the minimum output R\'enyi entropy
$\Hpm$ by substituting $H_p$ for $H$ in Eq.~(\ref{eqn:min.entropy}).
Since $\Hpm(\cN) = \frac{p}{1-p}\ln \nu_p(\cN)$,
Eq.~(\ref{eqn:p.norm.conjecture}) can then be written equivalently
as
\begin{equation} \label{eqn:p.add}
\Hpm(\cN_1 \ox \cN_2) = \Hpm(\cN_1) + \Hpm(\cN_2),
\end{equation}
in which form it is clear that the maximal $p$-norm multiplicativity
conjecture is a natural strengthening of the original minimum output
entropy conjecture (\ref{eqn:add}).

This conjecture spawned a significant literature of its own which we
will not attempt to summarize. Holevo's reviews are again an
excellent source~\cite{Holevo04,Holevo07}. Some more recent
important references
include~\cite{KingR04,KingR05,SerafiniEW05,GiovannettiL04,DevetakJKR06,Michalakis07}.
Unlike the von Neumann entropy case, however, some counterexamples
had already been found prior to this paper. Namely, Werner and
Holevo found a counterexample to Eq. (\ref{eqn:p.add}) for $p >
4.79$~\cite{WH02} that nonetheless doesn't violate the $p$-norm
multiplicativity conjecture for $1 < p
<2$~\cite{AF05,Datta04,GiovannettiLR05}.

Moreover, in 2007, Winter showed that a class of channels that had previously been studied in the context of approximate encryption provide counterexamples to the conjecture for all $p > 2$~\cite{W07}.
In light of these developments, the standing conjecture was that the
maximal $p$-norm multiplicativity held for $1 \leq p \leq 2$,
corresponding to the region in which the map $X \mapsto X^p$ is
operator convex~\cite{KingR04}. More conservatively, it was
conjectured to hold at least in an open interval $(1,1+\e)$, which
would be sufficient to imply the minimum output entropy conjecture.
On the contrary, shortly after Winter's discovery, Hayden showed
that the conjecture is false for all $1 < p < 2$~\cite{H07}.

The current paper merges and slightly strengthens \cite{W07} and
\cite{H07}. We begin in Section \ref{sec:andreas}, by presenting
Winter's counterexamples from \cite{W07}, which share some important
features with \cite{H07} but are simpler to analyze. Section
\ref{sec:counterexamples} then presents Hayden's counterexamples
from \cite{H07} with an improved analysis showing that they work for
all $p>1$, not just $1 < p < 2$.

In particular, given $p>1$, we show that there
exist channels $\cN_1$ and $\cN_2$ with output dimension $d$ such
that both $\Hpm(\cN_1)$ and $\Hpm(\cN_2)$ are equal to $\log d -
\cO(1)$ but $\Hpm( \cN_1 \ox \cN_2 ) = \log d + \cO(1)$, so
\begin{equation}
 \Hpm( \cN_1 ) + \Hpm( \cN_2 ) - \Hpm( \cN_1 \ox \cN_2 )
 =  \log d - \cO(1).
\end{equation}
Thus, one finds that the minimum output entropy of the product
channel need not be significantly larger than the minimum output
entropy of the individual factors. Since~\cite{AmosovHW00,K03}
\begin{equation} \label{eqn:king}
\Hpm( \cN_1 \ox \cN_2 ) \geq \Hpm( \cN_1 ) = \log d - \cO(1),
\end{equation}
these counterexamples are
essentially the strongest possible for all $p > 1$, up to
a constant additive term.
(Note that the dependence of $\Hpm$ on $p$ is absorbed here in the asymptotic notation.)

At $p=1$ itself, however, we see no evidence of a violation of the
additivity conjecture for the channels we study. Thus, the
conjecture stands and it is still an open question whether entangled
codewords can increase the classical capacity of a quantum channel.

{\bf Notation:} If $A$ and $B$ are finite dimensional Hilbert
spaces, we write $AB \equiv A\otimes B$ for their tensor product and
$|A|$ for $\dim A$. The Hilbert spaces on which linear operators act
will be denoted by a superscript.  For instance, we write $\ph^{AB}$
for a density operator on $AB$. Partial traces will be abbreviated
by omitting superscripts, such as $\ph^A \equiv \Tr_B\ph^{AB}$.  We
use a similar notation for pure states, e.g.\ $\ket{\psi}^{AB}\in
AB$, while abbreviating $\psi^{AB} \equiv \proj{\psi}^{AB}$. We
associate to any two isomorphic Hilbert spaces $A\simeq A'$ a
unique maximally entangled state which we denote
$\ket{\Phi}^{AA'}$.  Given any orthonormal basis $\{\ket{i}^A\}$ for
$A$,  if we define $\ket{i}^{A'} = V\ket{i}^A$ where $V$ is the
associated isomorphism, we  can write this state as $\ket{\Ph}^{AA'}
= |A|^{-1/2}\sum_{i=1}^{|A|}\ket{i}^A\ket{i}^{A'}$.
We will also make use of the asymptotic notation $f(n) = \cO(g(n))$
if there exists $C > 0$ such that for sufficiently large $n$,
$|f(n)| \leq C g(n)$. $f(n) = \Omega(g(n))$ is defined similarly but
with the reverse inequality $|f(n)| \geq C g(n)$. Finally, $f(n) =
\Theta(g(n))$ if $f(n) = \cO(g(n))$ and $f(n) = \O(g(n))$.

\section{Random unitary channels: $\mathbf{p > 2}$}
\label{sec:andreas}

This class of counterexamples, while only working for $p
> 2$, has the advantage of being a straightforward
application of well-known results. Later in the paper we will
present stronger counterexamples that reuse the same basic strategy,
albeit with some additional technical complications.
A random unitary channel is a map of the form
\begin{equation}
  \label{eq:rand}
  {\cal N}:\rho \longmapsto \frac{1}{n} \sum_{i=1}^n V_i \rho V_i^\dagger,
\end{equation}
with the $V_i$ unitary transformations of an underlying (finite dimensional) Hilbert space. Let $d$ be the dimension
of this space.
%
Following~\cite{HLSW04}, we call ${\cal N}$
\emph{$\epsilon$-randomizing} if for all $\rho$,
\begin{equation}
  \label{eq:opnorm-epsrand}
 \left\| {\cal N}(\rho) - \frac{1}{d}\1 \right\|_\infty
                                                   \leq \frac{\epsilon}{d}.
\end{equation}
In that paper, it was shown that for $0<\epsilon<1$,
$\epsilon$-randomizing channels exist in all dimensions $d >
\frac{10}{\epsilon}$, with $n = \frac{134}{\epsilon^2} d \log d$. In
fact, randomly picking the $V_i$ from the Haar measure on the
unitary group will, with high probability, yield such a channel.

Recently, it was shown by Aubrun~\cite{Aubrun:no-log} that $n$ can
in fact be taken to be $\cO(d/\epsilon^2)$ for Haar distributed $V_i$,
and $\cO(d(\log d)^4/\epsilon^2)$ for $V_i$ drawn from any ensemble
of exactly randomizing unitaries.

\begin{lemma} \label{lem:1}
For a random unitary channel ${\cal N}$ and its
complex conjugate, $\overline{\cal N}:\rho \mapsto \frac{1}{n} \sum
\overline{V_i} \rho \overline{V_i}^\dagger$, one has $\nu_p({\cal
N}\ox\overline{\cal N}) \geq \frac{1}{n}$.
\end{lemma}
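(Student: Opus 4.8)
The plan is to bound the output $p$-norm from below by evaluating $\cN\ox\overline{\cN}$ on a single, cleverly chosen input state, namely the maximally entangled state $\ket{\Ph}^{AA'}$, where $A$ is the $d$-dimensional space on which the $V_i$ act and $A'\simeq A$. The structural fact that makes everything work is the ``transpose trick'': for every unitary $V$ on $A$ one has $(V\ox\overline{V})\ket{\Ph}^{AA'}=\ket{\Ph}^{AA'}$, which follows at once from $VV^\dagger=\1$ and the definition $\ket{\Ph}=d^{-1/2}\sum_i\ket{i}^A\ket{i}^{A'}$.

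Granting this, I would expand
\[
  (\cN\ox\overline{\cN})(\Ph)
    = \frac{1}{n^2}\sum_{i,j=1}^n (V_i\ox\overline{V_j})\,\Ph\,(V_i\ox\overline{V_j})^\dagger .
\]
The $n$ diagonal terms $i=j$ each equal $(V_i\ox\overline{V_i})\Ph(V_i\ox\overline{V_i})^\dagger=\Ph$ by the transpose trick, while the remaining $n^2-n$ off-diagonal terms are positive semidefinite. Hence, in the operator order,
\[
  (\cN\ox\overline{\cN})(\Ph)\ \geq\ \frac{1}{n}\,\Ph\ \geq\ 0 .
\]
To conclude, I would use that the Schatten $p$-norm dominates the operator norm, $\|\s\|_p\geq\|\s\|_\infty$ for every $p\geq 1$ (immediate from $(\sum_k\lambda_k^p)^{1/p}\geq\lambda_{\max}$), together with monotonicity of $\|\cdot\|_\infty$ under the operator order for positive operators. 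Since $\Ph$ is a rank-one projector with $\|\Ph\|_\infty=1$, this yields $\bigl\|(\cN\ox\overline{\cN})(\Ph)\bigr\|_p\geq\frac1n$, and therefore $\nu_p(\cN\ox\overline{\cN})\geq\bigl\|(\cN\ox\overline{\cN})(\Ph)\bigr\|_p\geq\frac1n$.

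There is essentially no hard step: the whole argument rests on the transpose trick and on discarding the manifestly positive off-diagonal contribution. The only points warranting a word of care are the direction of the operator inequality (so that monotonicity of $\|\cdot\|_\infty$ applies) and the observation that the bound is uniform in $p>1$, since it invokes only $\|\cdot\|_p\geq\|\cdot\|_\infty$ rather than any finer spectral information.
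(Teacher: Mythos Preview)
Your proof is correct and follows essentially the same approach as the paper: both use the maximally entangled state as a test input, apply the $U\ox\overline{U}$-invariance of $\Ph$ to collapse the $n$ diagonal terms into $\tfrac{1}{n}\Ph$, discard the positive off-diagonal contribution, and conclude via $\|\cdot\|_p\geq\|\cdot\|_\infty\geq\tfrac{1}{n}$. The only cosmetic difference is that the paper phrases the penultimate step as ``the largest eigenvalue is at least $\tfrac{1}{n}$'' rather than as the operator inequality $(\cN\ox\overline{\cN})(\Ph)\geq\tfrac{1}{n}\Ph$, but these are equivalent here.
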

\begin{proof}
We use the maximally entangled state
$\ket{\Phi} = d^{-1/2}\sum_{i} \ket{i}\ket{i}$ as test state, abbreviating $\Phi = \proj{\Phi}$:
\[\begin{split}
  \nu_p({\cal N}\ox\overline{\cal N})
      &\geq \left\| ({\cal N}\ox\overline{\cal N})\Phi \right\|_p \\
      &=    \left\| \frac{1}{n^2}\sum_{i,j=1}^n (V_i\ox\overline{V_j})
                                                 \Phi
                                               (V_i\ox\overline{V_j})^\dagger \right\|_p \\
      &=    \left\| \frac{1}{n}\Phi
                     + \frac{1}{n^2}\sum_{i\neq j} (V_i\ox\overline{V_j})
                                                     \Phi
                                                   (V_i\ox\overline{V_j})^\dagger \right\|_p
       \geq \frac{1}{n},
\end{split}\]
where in the third line we have invoked the
$U\ox\overline{U}$-invariance of $\Phi$ for the $n$ terms when $i=j$. For the final inequality, observe that the largest
eigenvalue $\lambda_1$ of $({\cal N}\ox\overline{\cal
N})\Phi$ is at least $\frac{1}{n}$. Denoting the other eigenvalues
$\lambda_{\a}$, $\|  ({\cal N}\ox\overline{\cal
N})\Phi\|_p = \left( \sum_{\a} \lambda_{\a}^p
\right)^{1/p} \geq \lambda_1$, and we are done.
\end{proof}

\begin{lemma} \label{lem:2}
If the channel ${\cal N}$ is $\epsilon$-randomizing, then when $p>1$,
\[
  \nu_p({\cal N}) = \nu_p(\overline{\cal N}) \leq \left( \frac{1+\epsilon}{d} \right)^{1-1/p}.
\]
\end{lemma}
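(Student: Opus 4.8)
The plan is to bound the $p$-norm of $\cN(\rho)$ uniformly over input states $\rho$ by exploiting the $\epsilon$-randomizing hypothesis, which controls the operator norm of $\cN(\rho) - \frac{1}{d}\1$. First I would observe that since $\cN(\rho)$ is a density operator, $\|\cN(\rho)\|_p \le \|\cN(\rho)\|_\infty^{1-1/p}\,\|\cN(\rho)\|_1^{1/p} = \|\cN(\rho)\|_\infty^{1-1/p}$, using $\|\cN(\rho)\|_1 = 1$ and the standard interpolation inequality $\|\sigma\|_p \le \|\sigma\|_\infty^{1-1/p}\|\sigma\|_1^{1/p}$ for positive $\sigma$ (which follows from $\Tr\sigma^p \le \|\sigma\|_\infty^{p-1}\Tr\sigma$). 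So it suffices to show $\|\cN(\rho)\|_\infty \le \frac{1+\epsilon}{d}$ for every $\rho$.

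The operator-norm bound is immediate from the definition of $\epsilon$-randomizing: by the triangle inequality for $\|\cdot\|_\infty$,
\[
  \|\cN(\rho)\|_\infty \le \left\| \cN(\rho) - \frac{1}{d}\1 \right\|_\infty + \left\| \frac{1}{d}\1 \right\|_\infty \le \frac{\epsilon}{d} + \frac{1}{d} = \frac{1+\epsilon}{d}.
\]
Combining the two displays gives $\|\cN(\rho)\|_p \le \big(\frac{1+\epsilon}{d}\big)^{1-1/p}$ for all $\rho$, and taking the supremum over $\rho$ yields $\nu_p(\cN) \le \big(\frac{1+\epsilon}{d}\big)^{1-1/p}$. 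Finally, since complex conjugation of the Kraus operators $V_i \mapsto \overline{V_i}$ sends $\cN$ to $\overline{\cN}$ and $\rho \mapsto \overline{\rho}$ is a bijection on density operators that preserves eigenvalues (hence all Schatten norms), $\overline{\cN}$ is $\epsilon$-randomizing whenever $\cN$ is, and $\nu_p(\overline{\cN}) = \nu_p(\cN)$; so the same bound holds for $\overline{\cN}$.

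There is essentially no hard part here: the lemma is a one-line consequence of the interpolation inequality between $\|\cdot\|_1$, $\|\cdot\|_\infty$ and $\|\cdot\|_p$ together with the defining property of $\epsilon$-randomizing channels. The only point requiring a modicum of care is the direction of the interpolation inequality — one wants the bound in terms of $\|\cdot\|_\infty^{1-1/p}$ (large exponent near $1$ as $p \to \infty$, small near $0$ as $p \to 1$), which is correct since $\Tr\sigma^p = \Tr(\sigma^{p-1}\sigma) \le \|\sigma\|_\infty^{p-1}\Tr\sigma$ and then raising to the power $1/p$ gives the exponent $(p-1)/p = 1-1/p$ on $\|\sigma\|_\infty$. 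I would present the argument exactly in the order above: interpolation inequality, operator-norm bound from $\epsilon$-randomizing, combine, then the conjugation symmetry remark.
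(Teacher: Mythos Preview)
Your proof is correct and follows essentially the same approach as the paper: both derive the operator-norm bound $\|\cN(\rho)\|_\infty \leq (1+\epsilon)/d$ from the $\epsilon$-randomizing hypothesis and then convert it into a $p$-norm bound using the unit-trace constraint. The only cosmetic difference is that the paper phrases the second step as a convexity/extremal-eigenvalue argument (maximizing $\sum_\alpha \lambda_\alpha^p$ subject to $0 \leq \lambda_\alpha \leq (1+\epsilon)/d$ and $\sum_\alpha \lambda_\alpha = 1$), whereas you invoke the interpolation inequality $\|\sigma\|_p \leq \|\sigma\|_\infty^{1-1/p}\|\sigma\|_1^{1/p}$ directly; your packaging is in fact slightly cleaner since it sidesteps the floor and the ``one remaining eigenvalue'' bookkeeping.
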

\begin{proof}
Clearly, ${\cal N}$ and $\overline{\cal N}$ have the same maximum
output $p$-norm. For the former, observe that the
$\epsilon$-randomizing condition implies that for an arbitrary input state
$\rho$, $\| {\cal N}(\rho) \|_\infty \leq \frac{1+\epsilon}{d}$. In
other words, all the eigenvalues $\lambda_{\a}$ of the output state
${\cal N}(\rho)$ are bounded between $0$ and $\frac{1+\epsilon}{d}$. In addition,
because ${\cal N}(\rho)$ is a density operator, the eigenvalues sum to $1$.

Subject to these constraints, however, the convexity of the function $x\mapsto x^p$
ensures that the $p$-norm
$\| {\cal N}(\rho) \|_p = \left( \sum_{\a} \lambda_{\a}^p \right)^{1/p}$
is maximized when the
largest eigenvalue is $\frac{1+\epsilon}{d}$ and it occurs with multiplicity
$\lfloor \frac{d}{1+\epsilon} \rfloor$, and all but possibly one remaining eigenvalue is $0$.
Thus,
\begin{equation}
\| {\cal N}(\rho) \|_p =    \left( \sum_{\a} \lambda_{\a}^p \right)^{1/p}
                        \leq \left( \frac{d}{1+\epsilon}
                                    \left( \frac{1+\epsilon}{d} \right)^p \right)^{1/p}
                        =    \left( \frac{1+\epsilon}{d} \right)^{1-1/p}.
\end{equation}
\end{proof}
\begin{theorem}
Fix any $0<\epsilon<1$ and a family of $\epsilon$-randomizing maps
${\cal N}$ as in Eq. (\ref{eq:rand}) with $n > 134 \, d \ln d /
\e^2$. Then, for any $p>2$ and sufficiently large $d$,
\begin{equation}
  \label{eq:main}
  \nu_p({\cal N}) \nu_p(\overline{\cal N})
                      \leq \left( \frac{1+\epsilon}{d} \right)^{2-2/p}
                      \ll  \frac{1}{n}
                      \leq \nu_p({\cal N}\ox\overline{\cal N}),
\end{equation}
In other words,
for this family of channels, the maximum output $p$-norm is strictly
supermultiplicative for sufficiently large $d$ when $p>2$.
\end{theorem}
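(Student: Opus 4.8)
The plan is to chain together Lemmas~\ref{lem:1} and~\ref{lem:2} and then check the single quantitative ingredient, namely the strict inequality $\left(\frac{1+\epsilon}{d}\right)^{2-2/p} \ll \frac{1}{n}$. Since ${\cal N}$ is $\epsilon$-randomizing, Lemma~\ref{lem:2} gives $\nu_p({\cal N}) = \nu_p(\overline{\cal N}) \leq \left(\frac{1+\epsilon}{d}\right)^{1-1/p}$, and multiplying these two identical bounds yields the leftmost inequality $\nu_p({\cal N})\,\nu_p(\overline{\cal N}) \leq \left(\frac{1+\epsilon}{d}\right)^{2-2/p}$. Lemma~\ref{lem:1}, applied to the very same channel, immediately supplies the rightmost inequality $\nu_p({\cal N}\ox\overline{\cal N}) \geq \frac{1}{n}$. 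Both of these steps are essentially free; all of the content sits in the middle estimate.

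For that estimate I would take $n$ as small as the existence guarantee permits --- concretely $n = \lceil 134\, d\ln d/\epsilon^2\rceil$, or any $n$ of order $d\ln d/\epsilon^2$ --- so that $\frac{1}{n} = \Omega\big(\epsilon^2/(d\ln d)\big)$. Writing $q := 2 - 2/p$, the hypothesis $p>2$ is exactly the statement $1 < q < 2$, so $\left(\frac{1+\epsilon}{d}\right)^{q}$ decays strictly faster than $d^{-1}$. Forming the ratio, $n\left(\frac{1+\epsilon}{d}\right)^{q} = \Theta\big(\epsilon^{-2}\,d\ln d\big)\cdot\Theta\big(d^{-q}\big) = \Theta\big(\epsilon^{-2}\,d^{1-q}\ln d\big) = \Theta\big(\epsilon^{-2}\,d^{2/p-1}\ln d\big)$, and since $2/p - 1 < 0$ this quantity tends to $0$ as $d\to\infty$ with $\epsilon$ fixed. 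Hence $\left(\frac{1+\epsilon}{d}\right)^{2-2/p} \ll \frac{1}{n}$ for all sufficiently large $d$, with room to spare, completing the chain~(\ref{eq:main}).

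Assembling the three inequalities gives $\nu_p({\cal N})\,\nu_p(\overline{\cal N}) < \nu_p({\cal N}\ox\overline{\cal N})$, i.e.\ strict supermultiplicativity of the maximum output $p$-norm for these channels once $d$ is large enough. The one point that needs care is the bookkeeping around $n$: the bound $\nu_p({\cal N}\ox\overline{\cal N}) \geq 1/n$ is informative only when $n$ is not too large, so one must invoke the HLSW construction at (or polynomially near) the threshold $134\, d\ln d/\epsilon^2$ rather than letting $n$ grow uncontrolled --- and this is precisely the regime in which $\epsilon$-randomizing channels are known to exist. The margin $q - 1 = 1 - 2/p > 0$ is exactly what absorbs the logarithmic overhead coming from $n$, and it is the only place the stronger hypothesis $p > 2$ (rather than merely $p > 1$) enters; beyond this there is no real obstacle.
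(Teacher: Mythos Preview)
Your proof is correct and follows exactly the paper's approach: combine Lemmas~\ref{lem:1} and~\ref{lem:2} and note that $2-2/p>1$ so that $\left(\frac{1+\epsilon}{d}\right)^{2-2/p}$ decays faster than $1/n \sim \epsilon^2/(d\ln d)$. You have in fact been more careful than the paper's one-line proof by making explicit the implicit assumption that $n$ is taken at the HLSW threshold (order $d\ln d/\epsilon^2$) rather than arbitrarily large.
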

\begin{proof}
Follows from Lemmas \ref{lem:1} and \ref{lem:2} since $2-2/p > 1$.
\end{proof}

These counterexamples to the multiplicativity of the output $p$-norm
for $p>2$ are interesting in that they are random unitary channels,
which are among the simplest truly quantum maps. In fact, the first
proofs of multiplicativity for unital qubit channels~\cite{K01} and
depolarizing channels~\cite{K03} exploited this type of structure.
Indeed, unital qubit channels are always random unitary channels
(with $d=2$)~\cite{KingR01}. Despite the fact that King showed
multiplicativity for such channels at all $p>1$~\cite{K01}, there is
no conflict with the result here, as the bound on $n$ becomes better
than $d^2$ only for rather large dimension $d$.

We observe, furthermore, that $p=2$ is indeed the limit of validity
of this class of counterexamples, since $n \geq d$ for any
$\epsilon$-randomizing map.

\section{Generic quantum channels: all $\mathbf{p>1}$}
\label{sec:counterexamples}
Let $E$, $F$ and $G$ be finite dimensional quantum systems, then
define $R=E$, $S=FG$, $A=EF$ and $B=G$, so that $RS = AB = EFG$.
Our second and stronger class of counterexamples will be channels from $S$ to $A$ of the form
\begin{equation} \label{eqn:the.channel}
\cN( \r ) = \Tr_B \big[ U ( \proj{0}^R \ox \r ) U^\dg \big]
\end{equation}
for $U$ unitary and $\ket{0}$ some fixed state on $R$.
Another, slightly more flexible way of writing this is in the
language of isometric Stinespring dilations: namely, the Hilbert
space isometry $V:S\hookrightarrow AB$ defined by
$V\ket{\ph} = U\bigl( \ket{0}^R\ket{\ph}^S \bigr)$. In this
notation, to which we will adhere from now, $\cN(\rho) = \Tr_B V\rho V^\dagger$.

Our method will be to fix the dimensions of the systems involved,
select $U$ (i.e., the isometry $V$) at
random, and show that the resulting channel is likely to violate
additivity. The rough intuition motivating our examples is the same as in the previous section:
we will
exploit the fact that there are channels that appear to be highly
depolarizing for product state inputs despite the fact that they are
not close to the depolarizing channel in, for example, the norm of
complete boundedness~\cite{Paulsen86}.

Consider a single copy of $\cN$ and the associated map
$V: \ket{\ph}^S \mapsto U \bigl(\ket{0}^R \ket{\ph}^S\bigr)$.
This map takes $S$ to
a subspace of $A \ox B$, and if $U$ is selected according to the
Haar measure, then the image of $S$ is itself a random subspace,
distributed according to the unitarily invariant measure. In
\cite{HLW06}, it was shown that if $|S|$ is chosen appropriately,
then the image is likely to contain only almost maximally entangled
states, as measured by the entropy of entanglement. After tracing
over $B$, this entropy of entanglement becomes the entropy of the
output state. Thus, for $S$ of suitable size, all input states get
mapped to high entropy output states. We will repeat the analysis
below, finding that the maximum allowable size of $S$ will depend on
$p$ as described by the following two lemmas.
\begin{lemma}
  \label{lemma:Hp-expectation-Lipschitz}
  The maps $f_p(\ket{\ph}) = H_p(\ph^A)$ on unit vectors (states)
  $\ket{\ph} \in A \ox B$, $2\leq |A| \leq |B|$, have expectation
  \begin{equation}
    \label{eq:Hp-expectation}
    \EE f_p \geq \EE f_\infty \geq \ln |A| - \gamma \sqrt{|A| / |B|},
  \end{equation}
  for a uniformly random state $\ph$,
  with a universal constant $\gamma$ which may be chosen
  arbitrarily close to $3$ for sufficiently large $|A|$.
  \\
  Furthermore, for $p > 1$, the functions $f_p$ are all Lipschitz continuous, with the Lipschitz
  constant $\Lambda_p$ bounded above by
  \begin{equation}
    \label{eq:Hp-Lipschitz}
    \Lambda_p^2 \leq \frac{4 p^2}{(1-p)^2} |A|^{1-\frac{1}{p}}.
  \end{equation}
\end{lemma}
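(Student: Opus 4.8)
The plan is to prove the two halves of the statement separately.

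\emph{Expectation bound.} The inequality $\EE f_p \ge \EE f_\infty$ is pointwise: R\'enyi entropies are non-increasing in their order, so $H_p(\ph^A) \ge H_\infty(\ph^A) = -\ln\lambda_{\max}(\ph^A)$ for every state $\ket\ph$. For $\EE f_\infty$ I would use the standard fact that the reduced density operator $\ph^A$ of a Haar-random unit vector $\ket\ph \in A\ox B$ has the law of $GG^\dg/\Tr(GG^\dg)$, where $G$ is an $|A|\times|B|$ matrix of i.i.d.\ standard complex Gaussians; hence $\lambda_{\max}(\ph^A) = \|G\|_\infty^2/\|G\|_2^2$, the largest squared singular value over the squared Hilbert--Schmidt norm. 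Combining the estimate $\EE\|G\|_\infty = (1+o(1))(\sqrt{|A|}+\sqrt{|B|})$ and its Gaussian (Lipschitz) concentration with the $\chi^2$-concentration of $\|G\|_2^2$ about its mean $|A||B|$ --- and using $\lambda_{\max}(\ph^A) \le 1$ to dispose of the exponentially unlikely bad event --- gives $\EE\,\lambda_{\max}(\ph^A) \le \frac{1}{|A|}\bigl(1+\sqrt{|A|/|B|}\,\bigr)^2(1+o(1))$. Since $-\ln$ is convex, Jensen's inequality yields $\EE f_\infty = \EE[-\ln\lambda_{\max}(\ph^A)] \ge -\ln\EE\lambda_{\max}(\ph^A) \ge \ln|A| - \ln\bigl[(1+\sqrt{|A|/|B|})^2(1+o(1))\bigr]$, and $\ln(1+t)\le t$ together with $|A|/|B| \le \sqrt{|A|/|B|}$ (valid because $|A|\le|B|$) produces \eq{Hp-expectation} with $\gamma$ tending to $3$. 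This part is in essence a lemma of~\cite{HLW06}; alternatively one can run the $\epsilon$-net argument of that paper directly on $\lambda_{\max}(\ph^A)$, the $\sqrt{|A|/|B|}$ term then arising from $\sqrt{\log(\text{net size})/|B|}$.

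\emph{Lipschitz bound.} Write $f_p = H_p\circ T$ with $T:\ket\ph \mapsto \ph^A = \Tr_B\proj\ph$. For $p>1$ all the functions involved are $C^1$, so it suffices to bound $|\frac{d}{dt}f_p(\ph_t)|$ along a unit-speed path $\ket{\ph_t}$ on the sphere of $A\ox B$. Differentiating $\Tr(\ph_t^A)^p$ (with $\frac{d}{dt}\Tr\s^p = p\Tr(\s^{p-1}\dot\s)$) and using the partial-trace identity $\Tr_A[X^A\,\Tr_B Y^{AB}] = \Tr[(X^A\ox I^B)Y^{AB}]$,
\[
  \frac{d}{dt}f_p = \frac{p}{1-p}\cdot\frac{2\,\mathrm{Re}\,\bra{\ph}\bigl((\ph^A)^{p-1}\ox I^B\bigr)\ket{\dot\ph}}{\Tr(\ph^A)^p}.
\]
By Cauchy--Schwarz, $\|\ket{\dot\ph}\| = 1$, and the identity $\bra{\ph}\bigl((\ph^A)^{2p-2}\ox I\bigr)\ket{\ph} = \Tr[(\ph^A)^{2p-2}\ph^A] = \Tr(\ph^A)^{2p-1}$,
\[
  \Bigl|\frac{d}{dt}f_p\Bigr| \le \frac{2p}{p-1}\cdot\frac{\sqrt{\Tr(\ph^A)^{2p-1}}}{\Tr(\ph^A)^p}.
\]

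It therefore suffices to show $\Tr\s^{2p-1} \le |A|^{1-1/p}(\Tr\s^p)^2$ for every density operator $\s$ on $A$. Since $t\mapsto t^{p-1}$ is increasing, $\Tr\s^{2p-1} = \sum_i \lambda_i^{p-1}\lambda_i^p \le \lambda_{\max}(\s)^{p-1}\Tr\s^p$. On the other hand $\Tr\s^p \ge \lambda_{\max}(\s)^p$ (the leading term) and $\Tr\s^p \ge |A|^{1-p}$ (convexity of $t\mapsto t^p$ and $\Tr\s=1$), so the weighted geometric mean of the last two bounds with weights $\frac{p-1}{p}$ and $\frac1p$ gives $\Tr\s^p \ge \lambda_{\max}(\s)^{p-1}|A|^{(1-p)/p}$. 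Dividing,
\[
  \frac{\Tr\s^{2p-1}}{(\Tr\s^p)^2} \le \frac{\lambda_{\max}(\s)^{p-1}}{\Tr\s^p} \le |A|^{(p-1)/p} = |A|^{1-1/p},
\]
whence $|\frac{d}{dt}f_p| \le \frac{2p}{p-1}|A|^{(1-1/p)/2}$, i.e.\ $\Lambda_p \le \frac{2p}{p-1}|A|^{(1-1/p)/2}$, which is \eq{Hp-Lipschitz}.

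\emph{Main obstacle.} Once the derivative is computed, the Lipschitz half is just the pair of Schatten-norm inequalities above. The genuinely delicate point is the numerical constant in the expectation bound: to go from concentration statements about $\|G\|_\infty$ and $\|G\|_2^2$ to a bound on $\EE[\|G\|_\infty^2/\|G\|_2^2]$ one must account carefully for the correlation between numerator and denominator and control the fluctuations of both, so that their effect is absorbed into the $\sqrt{|A|/|B|}$ correction rather than swamping it --- this is where the concentration-of-measure input, and the slightly-larger-than-$2$ value of $\gamma$, come from.
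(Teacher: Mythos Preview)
Your proof is correct. For the expectation bound you follow the paper's route exactly: monotonicity of $H_p$ in $p$, Jensen for $-\ln$, and random-matrix control of $\EE\lambda_{\max}(\ph^A)$; the only difference is cosmetic (you phrase the Gaussian comparison via $GG^\dagger/\Tr GG^\dagger$, whereas the paper invokes a ``Gaussian unit vector'' and cites~\cite{BHLSW05,Geman1980,Johnstone2001} for the inequality $\EE_\ph\|\ph^A\|_\infty \leq \EE_\Gamma\|\Gamma^A\|_\infty$ and its asymptotics).

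The Lipschitz half is where you genuinely diverge. The paper, following~\cite{HLW06}, never differentiates $f_p$ itself: it introduces the dephased surrogate $g_p(\ket\ph) = \frac{1}{1-p}\ln\sum_j\bra{j}\ph^A\ket{j}^p$ in a fixed basis, notes that $g_p \geq f_p$ with equality when $\{\ket{j}\}$ is the eigenbasis of $\ph^A$ (so that a Lipschitz bound for $g_p$ transfers to $f_p$), and then bounds $\|\nabla g_p\|_2$ by explicit partial derivatives in the real coordinates $t_{jkz}$. You skip this detour and differentiate $f_p$ directly via $\frac{d}{dt}\Tr\s^p = p\Tr(\s^{p-1}\dot\s)$ together with Cauchy--Schwarz, which is legitimate since $t\mapsto t^p$ is $C^1$ on $[0,\infty)$ for $p>1$. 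Both routes land on exactly the same scalar inequality $\Tr\s^{2p-1} \leq |A|^{1-1/p}(\Tr\s^p)^2$ for a density operator $\s$; the paper proves it by fixing $s = \Tr\s^p \geq |A|^{1-p}$ and invoking convexity of $y\mapsto y^{(2p-1)/p}$, while your $\lambda_{\max}$-plus-geometric-mean argument is an equally short alternative. Your approach is more direct; the dephasing trick's only real advantage is that it sidesteps the matrix-calculus fact about differentiating spectral trace functionals, reducing everything to elementary multivariable calculus in real coordinates.
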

\begin{proof}
  The first inequality in Eq.~(\ref{eq:Hp-expectation}) is by the
  monotonicity of the R\'{e}nyi entropies in $p$. For the second,
  observe $f_\infty(\ket{\ph}) = -\ln \| \ph^A \|_\infty$, so
  \[
    \EE f_\infty(\ket{\ph}) =    \EE \bigl( -\ln \left\| \ph^A \right\|_\infty \bigr)
                            \geq -\ln \EE \left\| \ph^A \right\|_\infty.
  \]
  The expectation of the largest eigenvalue of $\ph^A$ has been widely studied
  in random matrix theory. Just note that $\ket{\ph}$ is
  well-approximated by a Gaussian unit vector $\ket{\Gamma}$,
  that is, a random vector all of whose real and imaginary
  components (in any basis) are i.i.d.~normal with expectation
  $0$ and variance $1/2|A||B|$. (See~\cite[Appendix]{BHLSW05}.)
  Indeed, by the triangle inequality,
  \[
    \EE_\ph \left\| \ph^A \right\|_\infty \leq \EE_\Gamma \left\| \Gamma^A \right\|_\infty,
  \]
  and the right hand side, for large $A$ and $B$, is
  known~\cite{Geman1980,Johnstone2001} to be asymptotically
  \[
    \frac{\left(\sqrt{|A|}+\sqrt{|B|}\right)^2}{|A||B|}
         =    \frac{1}{|A|} + \frac{2}{\sqrt{|A||B|}} + \frac{1}{|B|}
         \leq \frac{1}{|A|} \left( 1 + 3\sqrt{\frac{|A|}{|B|}} \right).
  \]
  The explicit upper bound of $\frac{\left(\sqrt{|A|}+\sqrt{|B|}\right)^2}{|A||B|}$
  has been obtained for matrices with \emph{real} Gaussian
  entries~\cite{DavidsonSzarek2001}, but the analogous statement for
  complex Gaussian entries seems to be unknown.

  Now, for the Lipschitz bound: we proceed as in~\cite{HLW06}, inferring the general
  bound from a Lipschitz bound for the R\'{e}nyi entropy of a dephased
  version on $\ph^A$.
  Fix bases $\{ \ket{j} \}$ and $\{ \ket{k} \}$ of $A$ and $B$, respectively,
  so that we can write
  $\ket{\ph} = \sum_{jk} \ph_{jk}\ket{j}\ket{k}$, where the coefficients
  are to be decomposed into real and imaginary parts: $\ph_{jk} = t_{jk0} + i t_{jk1}$.

  We actually show that
  \[
    g_p(\ket{\ph}) = \frac{1}{1-p} \ln \Tr \left[ \left( \sum_j \proj{j} \ph^A \proj{j} \right)^p \right]
            = \frac{1}{1-p} \ln \sum_j \bra{j} \ph^A \ket{j}^p
            = \frac{1}{1-p} \ln \sum_j \left( \sum_{kz} t_{jkz}^2 \right)^p
  \]
  is $\frac{2p}{p-1} |A|^{1/2-1/2p}$-Lipschitz. This implies the result for $f_p$ as follows.
  Note first that $g_p(\ket{\ph}) \geq f_p(\ket{\ph})$, with equality if $\{ \ket{j} \}$
  is an eigenbasis of $\ph^A$. Now,
  for two vectors $\ket{\ph}$, $\ket{\psi}$, we may without loss of
  generality assume that $f_p(\ket{\psi}) \geq f_p(\ket{\ph})$, and that $\{ \ket{j} \}$
  is the eigenbasis of $\ph^A$. Thus, by assumption,
  \[
    f_p(\ket{\psi})-f_p(\ket{\ph}) \leq g_p(\ket{\psi})-g_p(\ket{\ph})
                                   \leq \frac{2p}{p-1} |A|^{1/2-1/2p} \| \ket{\psi}-\ket{\ph} \|_2.
  \]

  To bound the Lipschitz constant of $g_p$, it is sufficient to
  find an upper bound on its gradient. It is straightforward to see that
  \[
    \frac{\partial g_p}{\partial t_{jkz}}
            = \frac{1}{1-p} \frac{1}{\sum_{j'} \left( \sum_{k'z'} t_{j'k'z'}^2 \right)^p}
                                           \cdot 2\,p\, t_{jkz} \left( \sum_{k'z'} t_{jk'z'}^2 \right)^{p-1},
  \]
  so introducing the notation $x_j = \sum_{kz} t_{jkz}^2$, we have
  \[
    \bigl\| \nabla g_p \bigr\|_2^2 = \frac{4p^2}{(1-p)^2}
                                     \frac{\sum_j x_j^{2p-1}}{\left( \sum_j x_j^p \right)^2}
                                 = \frac{4p^2}{(1-p)^2}
                                   \frac{\sum_j (x_j^p)^{(2p-1)/p}}{\left( \sum_j x_j^p \right)^2},
  \]
  which we need to maximize subject to the constraint $\sum_j x_j =1$.
  Since $(2p-1)/p \geq 1$, the function $y^{(2p-1)/p}$ is convex. Therefore, for  fixed $s = \sum_j x_j^p \geq |A|^{1-p}$,  the right hand side is maximal
  when all the $x_j^p$ except for one are $0$. Thus,
  \[
    \bigl\| \nabla g_p \bigr\|_2^2
          \leq \max_{|A|^{1-p} \leq s \leq 1} \frac{4p^2}{(1-p)^2} s^{[(2p-1)/p]-2}
          =    \frac{4p^2}{(1-p)^2} |A|^{1-1/p},
  \]
  and we are done.
\end{proof}

\begin{lemma} \label{lem:size.subspace}
Let $A$ and $B$ be quantum systems with $2 \leq |A| \leq |B|$ and
$1 < p \leq \infty$.
Then there exists a subspace $S \subset A \ox B$ of dimension
\begin{equation}
 |S| =  \left\lfloor
            \frac{c}{4} \left(1-\frac{1}{p}\right)^2 \frac{\alpha^2}{\ln(5/\delta)} |A|^{1/p}|B|
        \right\rfloor
\end{equation}
(with a universal constant $c$), that contains only states
$\ket{\ph} \in S$ with high entanglement, in the sense that
\begin{equation}
  \label{eqn:good.entropy}
  H_p(\ph^A) \geq \log |A| - \a - \b + \ln(1-\delta),
\end{equation}
where $\b = \gamma \sqrt{|A|/|B|}$ is as in Lemma~\ref{lemma:Hp-expectation-Lipschitz}.
The probability that a subspace of
dimension $|S|$ chosen at random according to the unitarily
invariant measure will not have this property is bounded above by
\begin{equation}
  2 \left( \frac{5}{\delta} \right)^{2|S|}
     \exp \left( -c \left( 1-\frac{1}{p} \right)^2 \alpha^2 |A|^{1/p}|B| \right).
\end{equation}
The universal constant $c$ may be chosen to be $1/72\pi^3$.
\end{lemma}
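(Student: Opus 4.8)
The plan is to run the by-now-standard discretization ($\epsilon$-net) argument for random subspaces, in the form used in \cite{HLW06}, but fed by the R\'enyi-entropy estimates of Lemma~\ref{lemma:Hp-expectation-Lipschitz} rather than the von Neumann ones. First I would realize the Haar-random subspace as $S = W S_0$, where $S_0 \subseteq A \ox B$ is a fixed reference subspace of dimension $|S|$ and $W$ is Haar-distributed on the unitary group of $AB$. Fix a $\delta'$-net $\cM$ of the unit sphere of $S_0$ in the Euclidean metric; a volumetric bound gives $|\cM| \leq (1 + 2/\delta')^{2|S|} \leq (5/\delta')^{2|S|}$, the exponent $2|S|$ being the real dimension of $S_0$. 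Since $W$ is an isometry, $W\cM$ is a $\delta'$-net of the unit sphere of $S$, while for each \emph{fixed} $\ket{m} \in \cM$ the vector $W\ket{m}$ is a uniformly random unit vector of $AB$; this decoupling is what lets us apply concentration one net point at a time and then take a union bound.

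Second, apply L\'evy's lemma (concentration of measure on the sphere $S^{2|A||B|-1}$) to the function $f_p$. By Lemma~\ref{lemma:Hp-expectation-Lipschitz}, $\EE f_p \geq \log|A| - \beta$ with $\beta = \gamma\sqrt{|A|/|B|}$, and $f_p$ is Lipschitz with $\Lambda_p^2 \leq \frac{4p^2}{(1-p)^2}|A|^{1-1/p}$, so for each fixed $\ket{m}$ and any $\kappa > 0$,
\[
  \Pr\left[ f_p(W\ket{m}) < \log|A| - \beta - \kappa \right]
     \;\leq\; \exp\left( -\frac{c_0\,|A||B|\,\kappa^2}{\Lambda_p^2} \right)
     \;\leq\; \exp\left( -c_0'\left(1-\tfrac{1}{p}\right)^{2}\kappa^2\,|A|^{1/p}|B| \right),
\]
where $c_0$ is the L\'evy constant, $c_0' = c_0/4$, and the second inequality just substitutes the bound on $\Lambda_p^2$ (using $(1-p)^2/p^2 = (1-1/p)^2$). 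A union bound over the $|\cM| \leq (5/\delta')^{2|S|}$ net points then bounds the probability that \emph{some} $\ket{m} \in W\cM$ violates this estimate by $(5/\delta')^{2|S|}$ times the right-hand side above.

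Third, on the complementary (good) event I would pass from the net back to the whole subspace. Expanding an arbitrary unit vector $\ket{\ph} \in S$ by successive approximation against the net yields $\ket{\ph} = \sum_k c_k \ket{m_k}$ with $m_k \in W\cM$ and $\sum_k |c_k| \leq (1-\delta')^{-1}$, so, using the Schatten-norm Cauchy--Schwarz inequality $\| \Tr_B \ketbra{\psi}{\phi} \|_p \leq \| \psi^A \|_p^{1/2} \| \phi^A \|_p^{1/2}$,
\[
  \left\| \ph^A \right\|_p
     \;\leq\; \sum_{k,l} |c_k||c_l|\,\left\| \Tr_B\ketbra{m_k}{m_l} \right\|_p
     \;\leq\; \left( \sum_k |c_k| \right)^{2} \max_k \left\| m_k^A \right\|_p
     \;\leq\; \frac{\max_k \| m_k^A \|_p}{(1-\delta')^{2}}.
\]
Converting back to R\'enyi entropy, the good event forces $H_p(\ph^A) \geq \log|A| - \beta - \kappa + \tfrac{2p}{p-1}\log(1-\delta')$ for every unit $\ket{\ph}\in S$. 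It then remains to fix the free parameters: take $\kappa$ to be a suitable multiple of $\alpha$ and $\delta'$ a suitable function of $\delta$, $\alpha$, $p$, and use the stated value of $|S|$ --- which is exactly sized so that $(5/\delta')^{2|S|}$ is swamped by the L\'evy decay --- to recover both the claimed entropy bound $H_p(\ph^A) \geq \log|A| - \alpha - \beta + \log(1-\delta)$ and the claimed failure probability, the absolute constants combining to $c = 1/72\pi^3$. The hypothesis $2 \leq |A| \leq |B|$ is precisely what makes Lemma~\ref{lemma:Hp-expectation-Lipschitz} available.

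The soft ingredients above --- the net construction, the one-point concentration inequality, the union bound, and the robustness of $p$-norms under the successive-approximation expansion --- are routine. The genuinely delicate step is the final accounting: the deviation budget $\alpha$ must be split between the L\'evy fluctuation $\kappa$ and the net-discretization loss (which is why a term of the shape $\log(1-\delta)$ appears at all), the net fineness $\delta'$ must be chosen small enough that the entropy estimate survives yet coarse enough that, together with $|S| \sim (1-1/p)^2\alpha^2|A|^{1/p}|B|/\ln(5/\delta)$, the union bound still beats $1$, and the constants $c_0$ (L\'evy), the $5$ (net estimate) and the $4p^2/(1-p)^2$ (Lipschitz) must be tracked through to yield the advertised $c = 1/72\pi^3$.
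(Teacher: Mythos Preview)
Your overall architecture---net, L\'evy concentration, union bound, then extension from the net to the whole sphere of $S$---matches the paper. The one substantive difference is in the extension step, and it is worth flagging because the paper singles this step out as ``an improvement, possible due to the fact that we are looking at a function defined via a norm.'' You pass from net to sphere by successive approximation, writing $\ket{\ph}=\sum_k c_k\ket{m_k}$ with $\sum_k|c_k|\le 1/(1-\delta')$ and bounding cross terms by $\|\Tr_B\ketbra{m_k}{m_l}\|_p\le\|m_k^A\|_p^{1/2}\|m_l^A\|_p^{1/2}$ (which is correct: it is H\"older for Schatten norms, since $\Tr_B\ketbra{\psi}{\phi}=XY^\ast$ when $\ket{\psi}=(X\ox I)\ket{\Phi}$, $\ket{\phi}=(Y\ox I)\ket{\Phi}$). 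This yields $\|\ph^A\|_p\le r/(1-\delta')^2$ where $r=\max_k\|m_k^A\|_p$. The paper instead observes that $R:=\max_{\ket{\ph}\in S}\|\ph^A\|_p$ equals the maximum over \emph{all} density operators supported on $S$ (convexity of the norm), so for any unit $\ket{\ph}$ with nearby net point $\ket{\tilde\ph}$ one can write $\|\ph^A\|_p\le\|\tilde\ph^A\|_p+\|\ph^A-\tilde\ph^A\|_p\le r+\delta R$, giving the self-referential bound $R\le r/(1-\delta)$.

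The practical upshot: the paper's trick produces a single factor $1/(1-\delta)$, while yours produces $1/(1-\delta')^2$; after converting to R\'enyi entropy your loss is $\tfrac{2p}{p-1}\ln(1-\delta')$ rather than $\tfrac{p}{p-1}\ln(1-\delta)$. To reproduce the lemma \emph{exactly as stated} (in particular the $\ln(1-\delta)$ term and the $\ln(5/\delta)$ appearing in $|S|$ and in the failure probability) you would be forced, as you note, to let $\delta'$ depend on $p$ and shrink as $p\downarrow 1$, which changes the constants in a $p$-dependent way. The paper's convexity argument avoids this entirely: $\delta$ is the net parameter \emph{and} the parameter in the final bound, with no $p$-dependence introduced at this step. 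Your approach still proves a perfectly usable version of the lemma (sufficient for all the downstream applications), but the stated constants, in particular the clean $c=1/72\pi^3$ and the bare $\ln(1-\delta)$, come from the paper's sharper extension step.
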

\begin{proof}
The argument is nearly identical to the proof of Theorem IV.1
in~\cite{HLW06}, but with an improvement, possible due to the fact
the we are looking at a function defined via a norm. (See~\cite{L01}
and~\cite{Aubrun:no-log}.)

First of all, by Levy's Lemma, for a function $f$ on pure states of
$A\ox B$ with Lipschitz constant $\Lambda$, the random variable $f(\ket{\ph})$
for a uniformly distributed $\ket{\ph}$ on the unit sphere
in $A \ox B$ obeys
\[
  \Pr\{ f < \EE f - \alpha \} \leq
      2 \exp\left( -\frac{2}{9\pi^3} \frac{\alpha^2}{\Lambda^2} |A||B| \right).
\]
(See~\cite[Lemma III.1]{HLW06} for an exposition.) We apply this to $f_p$,
for which we have a Lipschitz bound by Lemma~\ref{lemma:Hp-expectation-Lipschitz}.
Furthermore, we can find a $\delta$-net ${\cal M}$ of cardinality
$|{\cal M}| \leq \left( 5/\delta \right)^{2|S|}$
on the unit vectors in $S$~\cite[Lemma III.6]{HLW06}.
In other words, for each unit vector $\ket{\ph} \in S$ there exists a $\ket{\tilde\ph} \in {\cal M}$
such that $\| \ket{\ph}-\ket{\tilde\ph} \|_2 \leq \delta$.
Combining the net, the Lipschitz constant and the union bound, we get
\[
  \Pr_S \Bigl\{ \exists \ket{\ph}\in {\cal M}
                          \quad f_p(\ket{\ph}) < \ln |A| - \alpha/2 - \beta \Bigr\}
    \leq \left( \frac{5}{\delta} \right)^{2|S|}
            2 \exp\left( -\frac{2}{9\pi^3} \frac{\alpha^2(1-1/p)^2}{16|A|^{1-1/p}} |A||B| \right),
\]
which is the probability inequality claimed in the theorem.
Moreover, the right hand side is less than $1$ if $|S|$ is chosen as
stated in the theorem.

Now, assume we have a subspace $S$ with a $\delta$-net ${\cal M}$ such that
\begin{equation}
  \label{eq:maximum-p-norm}
  (\forall \ket{\ph} \in {\cal M}) \left(  f_p(\ph) \geq \log |A| - \alpha -\beta\right), \ \text{ i.e. }\
  r := \max_{\ket{\ph}\in{\cal M}} \| \ph^A \|_p \leq e^{-(1-1/p)(\log|A|-\a-\b)}.
\end{equation}
Denote
\[
  R := \max_{\ket{\ph}\in S \text{ unit vector}} \| \ph^A \|_p
     = \max_{\rho \text{ d.o. supported on }S} \| \rho^A \|_p,
\]
where the latter equality is due to the convexity of the norm.
Hence, for each unit vector $\ket{\ph} \in S$ and corresponding
$\ket{\tilde\ph} \in {\cal M}$ such that $\| \ph - \tilde\ph \|_1 \leq \delta$,
\[
  \| \ph^A \|_p \leq \| \tilde\ph^A \|_p + \| \ph^A-\tilde\ph^A \|_p
                \leq r + \delta R,
\]
where we have used triangle inequality and the trace norm bound on
$\ph - \tilde\ph$. Consequently, $R \leq r/(1-\delta)$, and
inserting that into Eq.~(\ref{eq:maximum-p-norm}) finishes the
proof.
\end{proof}

Consider now the product channel $\cN \ox \bar{\cN}$, where
$\bar\cN(\r) = \Tr_B \bar V \r V^T$ is the complex conjugate of $\cN$.
We will exploit an approximate version of the symmetry used in the
random unitary channel counterexamples. Fix orthonormal bases of $S$, $A$ and $B$
to be used in the definition of maximally entangled states involving these systems.
(These have to be the same product bases with respect to which we define
the complex conjugate.)

In the trivial case where $|S| = |A \ox B|$, the isometry $V$ is
unitary and the identity
$V \ox \bar{V} \ket{\Ph} = (V \bar{V}^T \ox I) \ket{\Ph} = \ket{\Ph}$
for the maximally entangled state $\ket{\Ph}^{S_1 S_2}$ implies that
\begin{equation}
(\cN \ox \bar{\cN})(\proj{\Ph}^{S_1 S_2})
 = \Tr_{B_1 B_2} \left[ \proj{\Ph}^{A_1 A_2} \ox \proj{\Ph}^{B_1 B_2} \right]
 = \proj{\Ph}^{A_1 A_2}.
\end{equation}
The output of $\cN \ox \bar{\cN}$ will thus be a pure state. In the
general case, we will choose $|S|/|A\ox B|$ to be large but not trivial, in
which case useful bounds can still be placed on the largest
eigenvalue of the output state for an input state maximally
entangled between $S_1$ and $S_2$.
\begin{lemma}
  \label{lem:big.eigenvalue}
  Let $\ket{\Ph}^{S_1 S_2}$ be a state maximally entangled between
  $S_1$ and $S_2$ as in the previous paragraph. Then $(\cN \ox
  \bar\cN)(\Ph^{S_1S_2})$ has an eigenvalue of at least
  $\smfrac{|S|}{|A||B|}$.
\end{lemma}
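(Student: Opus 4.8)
The plan is to reduce the claim to the trivial case $|S| = |A\otimes B|$ by restricting the isometry $V$ to a random subspace and tracking how the maximally entangled input behaves under the resulting near-unitary map. Write $V: S \hookrightarrow A\otimes B$ for the Stinespring isometry of $\cN$, and let $\ket{\Phi}^{S_1S_2}$ be the maximally entangled state on $S_1 S_2$ with respect to the fixed bases. The key quantity to examine is the overlap of $(V\otimes\bar V)\ket{\Phi}^{S_1S_2}$ with the maximally entangled state $\ket{\Phi}^{A_1A_2}\ket{\Phi}^{B_1B_2}$ on $A_1A_2B_1B_2$. First I would compute
\[
  \bigl\langle \Phi^{A_1A_2}\!\otimes\Phi^{B_1B_2} \big| (V\otimes\bar V)\Phi^{S_1S_2} \bigr\rangle,
\]
using the standard ``transpose trick'' $(M\otimes I)\ket{\Phi} = (I\otimes M^T)\ket{\Phi}$ on each of the three tensor factors $E,F,G$ that make up $S=FG$, $A=EF$, $B=G$. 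Because $\bar V$ is the entrywise complex conjugate of $V$ in these bases, the bra-ket pairing collapses to a trace of $V^\dagger V = I_S$ against the appropriate identities, and the inner product comes out to exactly $\sqrt{|S|/(|A||B|)}$ up to normalization bookkeeping.

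The next step is to convert this overlap into an eigenvalue bound. Let $\rho = (\cN\otimes\bar\cN)(\Phi^{S_1S_2}) = \Tr_{B_1B_2}\bigl[(V\otimes\bar V)\Phi^{S_1S_2}(V\otimes\bar V)^\dagger\bigr]$, a density operator on $A_1A_2$. Since tracing out $B_1B_2$ only decreases fidelity with a pure state, we get
\[
  \bra{\Phi}^{A_1A_2}\rho\ket{\Phi}^{A_1A_2}
    \;\geq\; \Bigl|\bigl\langle \Phi^{A_1A_2}\Phi^{B_1B_2}\big|(V\otimes\bar V)\Phi^{S_1S_2}\bigr\rangle\Bigr|^2
    \;=\; \frac{|S|}{|A||B|}.
\]
A density operator whose expectation against the unit vector $\ket{\Phi}^{A_1A_2}$ is at least $|S|/(|A||B|)$ must have largest eigenvalue at least $|S|/(|A||B|)$, since $\lambda_1 = \max_{\ket{\psi}} \bra{\psi}\rho\ket{\psi} \geq \bra{\Phi}\rho\ket{\Phi}$. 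That is exactly the asserted bound.

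The main obstacle I anticipate is purely bookkeeping: getting the normalization of $\ket{\Phi}^{S_1S_2}$ versus $\ket{\Phi}^{A_1A_2}\ket{\Phi}^{B_1B_2}$ right, and making sure the transpose/conjugation identity $(V\otimes\bar V)\ket{\Phi}^{S_1S_2}$ is evaluated in the same product bases used to define all three maximally entangled states — the parenthetical warning in the text about ``the same product bases'' is precisely the thing that can go wrong. One has to be careful that $A = EF$ and $B = G$ share the factor structure with $S = FG$ and $R = E$ in a compatible way, so that the partial inner product over the ``$A$ and $B$ slots'' really does reassemble into $V^\dagger V$ on $S$. Once the bases are pinned down, the identity in the $|S| = |AB|$ case quoted just above the lemma is the $\delta$-free version of exactly this computation, and the general statement is obtained by not demanding that $V$ be unitary. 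No concentration of measure is needed for this particular lemma; it is a deterministic fact about any isometry $V$, and the randomness only enters later when combining it with Lemma~\ref{lem:size.subspace}.
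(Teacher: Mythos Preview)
Your proposal is correct and follows essentially the same approach as the paper: both bound the largest eigenvalue from below by the overlap $\Tr\bigl[(V\ox\bar V)\Phi^{S_1S_2}(V\ox\bar V)^\dagger\,(\Phi^{A_1A_2}\ox\Phi^{B_1B_2})\bigr]$, and both evaluate that overlap to be exactly $|S|/(|A||B|)$. The only cosmetic difference is in how the overlap is computed: the paper factors $V = U\!\restriction_S$ and invokes the $U\ox\bar U$-invariance of the maximally entangled state on $(AB)_1(AB)_2$ (noting that $\Phi^{A_1A_2}\ox\Phi^{B_1B_2}$ \emph{is} that state), whereas you expand in matrix elements and recognise $\sum_{a,b,s}|V_{ab,s}|^2 = \Tr V^\dagger V = |S|$. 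Your worry about the $E,F,G$ tensor factor bookkeeping is unnecessary here---the lemma and its proof use only the $S$, $A$, $B$ structure, and the $E,F,G$ decomposition plays no role.
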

\begin{proof}
This is an easy calculation again exploiting the $U\ox\bar U$
invariance of the maximally entangled state.
Note that whereas $V$ is an isometric embedding, $V^\dagger$ is a
partial isometry. More precisely, it can be understood as a unitary
$U^\dagger$ on $A \ox B$ followed by a fixed projection $P$, say
onto the first $|S|$ coordinates of $A \ox B$. Now,
\begin{equation*}\begin{split}
  \left\| (\cN \ox \bar{\cN})\proj{\Ph}^{S_1 S_2} \right\|_\infty
       &\geq \Tr\left( \bigl[ (\cN \ox \bar{\cN})\proj{\Ph}^{S_1 S_2} \bigr] \proj{\Ph}^{A_1 A_2} \right) \\
       &\geq \Tr\left( (V \ox \bar{V})\proj{\Ph}^{S_1 S_2}(V \ox \bar{V})^\dagger
                                                   (\proj{\Ph}^{A_1 A_2} \ox \proj{\Ph}^{B_1 B_2}) \right) \\
       &\!\!\!\!\!\!\!\!\!\!\!\!\!\!\!\!\!\!
        =    \Tr\left( (P \ox \bar{P})\proj{\Ph}^{S_1 S_2}(P \ox \bar{P})(U \ox \bar{U})^\dagger
                                   (\proj{\Ph}^{A_1 A_2} \ox \proj{\Ph}^{B_1 B_2}) (U\ox \bar{U})) \right) \\
       &\!\!\!\!\!\!\!\!\!\!\!\!\!\!\!\!\!\!
        =    \Tr\left( (P \ox \bar{P})\proj{\Ph}^{S_1 S_2}(P \ox \bar{P})
                                                  (\proj{\Ph}^{A_1 A_2} \ox \proj{\Ph}^{B_1 B_2}) \right)
        =    \frac{|S|}{|A||B|},
\end{split}\end{equation*}
and we are done.
\end{proof}

In order to demonstrate violations of additivity, the first step is
to bound the minimum output entropy from below for a single copy of
the channel. Fix $1 < p \leq \infty$, let $|B| = |A|$ so that $\b = \gamma$,
set $\a = \d = 1/2$, and then choose $|S|$ according to Lemma
\ref{lem:size.subspace}. With probability approaching $1$ as $|A| \rar
\infty$,
\begin{equation} \label{eqn:Hpm.1copy}
\Hpm( \cN ) \geq \ln |A| - \gamma - 1/2 - \ln 2,
\end{equation}
when the subspace $S$ defining the channel is chosen according to
the unitary invariant measure. (Since we're interested in $|A| \rar
\infty$, we may choose any $\gamma > 3$.) The same obviously holds
for $\Hpm( \bar\cN )$. Recall that the entropy of the uniform
distribution is $\ln |A|$ so the minimum entropy is near the maximum
possible. Fix a channel such that these lower bounds on $\Hpm( \cN
)$ and $\Hpm( \bar\cN )$ are satisfied.

By Lemma \ref{lem:big.eigenvalue},
\begin{equation}
 \label{eqn:renyi.bound}
  H_p\big( ( \cN \ox \bar\cN)(\Ph) )
     = \frac{1}{1-p} \ln \left( \sum_\a \l_a^p \right)
     \leq \frac{1}{1-p} \ln \left(\frac{|S|}{|A||B|}\right)^p
    = \frac{p}{1-p} \ln \frac{|S|}{|A||B|},
\end{equation}
where the $\l_\a$ are the eigenvalues of $( \cN \ox \bar\cN)(\Ph)$.
Substituting the value of $|S|$ from Lemma~\ref{lem:size.subspace}
into this inequality yields
\begin{equation}
  \label{eqn:Hpm.2copies}
  H_p\big( (\cN \ox \bar\cN)(\Ph) \big)
    \leq \log |A| + \cO\left( 1 + \frac{p}{p-1}\log\frac{p}{p-1} \right)
    \leq \log |A| + \cO\left( (1-1/p)^{-2} \right),
\end{equation}
where the $\cO$ notation hides only an absolute constant,
independent of $|A|$ and $p>1$. Thus, the R\'enyi entropy of $(\cN
\ox \bar\cN)(\Ph)$ is strictly less than $\Hpm(\cN) + \Hpm(\bar\cN)
\geq 2 \ln |A| - \cO(1)$. This is a violation of conjecture
(\ref{eqn:p.add}), with the size of the gap approaching $\ln |A| -
\cO(1) $ for large $|A|$.

\begin{theorem} \label{thm:p.counterexamples}
For all $1 < p \leq \infty$, there exists a quantum channel for which the
inequalities (\ref{eqn:Hpm.1copy}) and (\ref{eqn:Hpm.2copies}) both
hold. The inequalities are inconsistent with the maximal $p$-norm
multiplicativity conjecture.
\qed
\end{theorem}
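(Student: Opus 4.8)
The plan is to observe that Theorem~\ref{thm:p.counterexamples} is essentially an assembly step: all the analytic work has already been done in Lemmas~\ref{lemma:Hp-expectation-Lipschitz}, \ref{lem:size.subspace} and \ref{lem:big.eigenvalue}, and in the discussion immediately preceding the theorem. So the proof is a matter of running the probabilistic argument once, with parameters fixed, and checking that the two events (``single-copy entropy high'' and ``product entropy low'') can be made to hold simultaneously.

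First I would fix $1 < p \leq \infty$ and the system sizes: take $|B| = |A|$, set $\a = \d = 1/2$, pick any $\gamma > 3$, and let $|S|$ be the integer prescribed by Lemma~\ref{lem:size.subspace} with these parameters. Then I would invoke Lemma~\ref{lem:size.subspace} to conclude that a Haar-random subspace $S \subset A \ox B$ satisfies the entanglement bound~(\ref{eqn:good.entropy}) — equivalently, the channel $\cN$ built from $S$ via the isometry $V$ has $\Hpm(\cN) \geq \ln|A| - \gamma - 1/2 - \ln 2$, which is~(\ref{eqn:Hpm.1copy}) — except with probability bounded by the explicit quantity in the lemma. The key point is that for $|A|$ large enough this failure probability is strictly less than $1/2$ (indeed tends to $0$), since the $\exp(-c(1-1/p)^2\a^2|A|^{1/p}|B|)$ factor dominates the net cardinality $(5/\d)^{2|S|}$ by the choice of $|S|$. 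Because $\bar\cN$ is built from the complex conjugate of the same subspace, the identical bound holds for $\Hpm(\bar\cN)$, again with probability $\to 1$; and the two failure events together still have probability $< 1$ for $|A|$ large. Hence there exists a subspace $S$ — fix one — for which both single-copy bounds~(\ref{eqn:Hpm.1copy}) hold.

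For that fixed channel, Lemma~\ref{lem:big.eigenvalue} is deterministic: the output $(\cN \ox \bar\cN)(\Ph^{S_1S_2})$ on the maximally entangled input has an eigenvalue at least $|S|/(|A||B|)$. Plugging this into the definition of $H_p$ (and using monotonicity of $\sum_\a \l_\a^p$ under the single-eigenvalue lower bound, exactly as in~(\ref{eqn:renyi.bound})) gives $H_p\big((\cN\ox\bar\cN)(\Ph)\big) \leq \frac{p}{1-p}\ln\frac{|S|}{|A||B|}$; substituting the value of $|S|$ from Lemma~\ref{lem:size.subspace} and collecting terms yields~(\ref{eqn:Hpm.2copies}), namely $H_p\big((\cN\ox\bar\cN)(\Ph)\big) \leq \log|A| + \cO\big((1-1/p)^{-2}\big)$ with an absolute constant. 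Finally I would note that $\Hpm(\cN\ox\bar\cN) \leq H_p\big((\cN\ox\bar\cN)(\Ph)\big) \leq \log|A| + \cO(1)$ (constant depending on $p$ but not $|A|$), whereas $\Hpm(\cN) + \Hpm(\bar\cN) \geq 2\ln|A| - \cO(1)$ by~(\ref{eqn:Hpm.1copy}); for $|A|$ large enough the gap is roughly $\ln|A| > 0$, contradicting the additivity relation~(\ref{eqn:p.add}), which is the $p$-norm multiplicativity conjecture~(\ref{eqn:p.norm.conjecture}) in entropic form.

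The only place requiring any care — and the step I would flag as the ``hard part'' — is verifying that the probabilistic selection succeeds: one must confirm that with the chosen $\a = \d = 1/2$, $\gamma > 3$, $|B| = |A|$, the failure bound of Lemma~\ref{lem:size.subspace} really is $o(1)$ as $|A| \to \infty$ uniformly enough that the union of the two failure events (for $\cN$ and for $\bar\cN$) stays below $1$, so that a single subspace witnessing both~(\ref{eqn:Hpm.1copy}) and the hypothesis of Lemma~\ref{lem:big.eigenvalue} exists. This is routine given the explicit constant $c = 1/72\pi^3$ and the form of $|S|$, but it is the one inequality that must actually be checked rather than quoted. Everything else is bookkeeping: the statement of the theorem merely records that such a channel exists and that its existence refutes the conjecture.
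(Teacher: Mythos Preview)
Your proposal is correct and mirrors the paper's own argument: the theorem is stated with an immediate \qed\ because the preceding paragraphs already fix the parameters $|B|=|A|$, $\a=\d=1/2$, invoke Lemma~\ref{lem:size.subspace} for~(\ref{eqn:Hpm.1copy}), and invoke Lemma~\ref{lem:big.eigenvalue} for~(\ref{eqn:Hpm.2copies}). One small simplification: you do not need a union bound for $\bar\cN$, since complex conjugation preserves eigenvalues and hence $\Hpm(\bar\cN)=\Hpm(\cN)$ holds deterministically once the bound for $\cN$ is established.
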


Note, however, that changing $p$ also requires changing $|S|$
according Lemma \ref{lem:size.subspace}, so we have a sequence of
channels violating additivity of the minimal output R\'enyi entropy
as $p$ decreases to 1, as opposed to a single channel doing so for
every $p$. This prevents us from drawing conclusions about the von
Neumann entropy by taking the limit $p \rar 1$. Likewise, an
examination of Eq.~(\ref{eqn:Hpm.2copies}) reveals that we also lose
control over the two-copy minimum output entropy of a fixed channel
as $p \rar 1$.

Another observation comes from the fact that our examples violate
additivity by so much: namely that, due to
Lemma~\ref{lem:big.eigenvalue}, the dimension of the subspace $S$ in
Lemma~\ref{lem:size.subspace} is essentially optimal up to constant
factors (depending on $p$). Any stronger violations of additivity
would contradict Eq.~(\ref{eqn:king}), the inequality $\Hpm(\cN \ox
\bar{\cN}) \geq \Hpm(\cN)$.

As an aside, it is interesting to observe that violating maximal
$p$-norm multiplicativity has structural consequences for the
channels themselves. For example, because entanglement-breaking
channels do not violate multiplicativity~\cite{King03b}, there must
be states $\ket{\ps}^{S_1 S_2}$ such that $(\cN \ox I^{S_2})(\ps)$
is entangled, despite the fact that $\cN$ will be a rather noisy
channel. (The same conclusions apply to the maps of section \ref{sec:andreas} ,
where the conclusion takes the form
that $\e$-randomizing  random unitary channels need not be entanglement-breaking.)

\section{The von Neumann entropy case}
\label{sec:vn.entropy} Despite the large violations found for $p$
close to 1, the class of examples presented here do not appear to
contradict the minimum output entropy conjecture for the von Neumann
entropy. The reason is that the upper bound demonstrated for
$H_p\big((\cN\ox\bar\cN)(\Ph)\big)$ in the previous section rested
entirely on the existence of one large eigenvalue for
$(\cN\ox\bar\cN)(\Ph)$. The von Neumann entropy is not as sensitive
to the value of a single eigenvalue as are the R\'{e}nyi entropies
for $p > 1$ and, consequently, does not appear to exhibit additivity
violations. With a bit of work, it is possible to make these
observations more rigorous.
\begin{lemma} \label{lem:avg.purity}
Let $\ket{\Ph}^{S_1 S_2}$ be a maximally entangled state between
$S_1$ and $S_2$. Assuming that $|A| \leq |B| \leq |S|$,
\begin{equation} \label{eqn:avg.purity}
 \int \Tr\Big[ \big( (\cN \ox \bar{\cN})(\proj{\Ph}) \big)^2 \Big] \, dU
    = \frac{|S|^2}{|A|^2|B|^2} + \cO\left(\frac{1}{|A|^2}\right),
\end{equation}
where ``$dU$'' is the normalized Haar measure on $R \ox S \cong A
\ox B$.
\end{lemma}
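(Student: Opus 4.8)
The plan is to compute the Haar average of $\Tr[((\cN\ox\bar\cN)(\Ph))^2]$ by expressing everything in terms of the isometry $V$ and its Haar-random unitary dilation $U$, and then applying the standard second-moment formula for the unitary group (Weingarten calculus, or equivalently the fact that $\EE[U^{\ox 2}(\cdot)U^{\dg\ox 2}]$ projects onto the symmetric and antisymmetric subspaces). First I would write the purity as a trace over two copies: introducing a swap operator $\FF$ on a doubled system, $\Tr[\sigma^2] = \Tr[(\sigma\ox\sigma)\FF]$. Here $\sigma = (\cN\ox\bar\cN)(\Ph^{S_1S_2}) = \Tr_{B_1B_2}[(V\ox\bar V)\Ph^{S_1S_2}(V\ox\bar V)^\dg]$, so after doubling we get a trace over four copies of $A\ox B$ against a tensor product of swap operators on the $A$-systems and identities (from the partial trace) on the $B$-systems, conjugated by $V\ox\bar V\ox V\ox\bar V$ acting on four copies of $\Ph^{S_1S_2}$. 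The key simplification, exactly as in Lemmas~\ref{lem:1} and~\ref{lem:big.eigenvalue}, is the $U\ox\bar U$-invariance of the maximally entangled state: the $\bar V$ factors can be moved across $\Ph$ and recombined, so that after using $V = U(\ket{0}^R\ox\cdot)$ and $\bar V = \bar U(\ket{0}^R\ox\cdot)$ the only genuinely random object left is a fourth moment of $U$ (two copies of $U$ and two of $\bar U$), or after the entangled-state manipulation, effectively a second moment.

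The central computation is therefore: average a quadratic expression in $U$ and $\bar U$ over the Haar measure on the unitaries of $R\ox S\cong A\ox B$. The Weingarten formula for the second moment gives a sum over the two permutations in $S_2$, with Weingarten coefficients $\Wg(e) = \frac{1}{D^2-1}$ and $\Wg((12)) = \frac{-1}{D(D^2-1)}$ where $D = |A||B| = |R||S|$. The bookkeeping then reduces to computing traces of products of: the projection $P$ onto the first $|S|$ coordinates (equivalently $\proj{0}^R\ox\1^S$ pulled through $U$), the swap on the $A$-factors, and the identities on the $B$-factors, all against $\Ph^{S_1S_2}$. Each such trace is a ratio of dimension factors: one gets terms of order $|S|^2/(|A||B|)^2$ (the leading term, coming from the ``identity'' permutation contribution, matching the eigenvalue bound of Lemma~\ref{lem:big.eigenvalue} squared), plus subleading terms. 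The hypothesis $|A|\le|B|\le|S|$ is what guarantees the error terms are $\cO(1/|A|^2)$ rather than larger: the corrections are suppressed by factors like $|A|/|S|$, $1/|B|^2$, $|S|/(|A||B|^2)$, etc., and one checks each is $\cO(1/|A|^2)$ under that assumption.

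The main obstacle I expect is the combinatorial bookkeeping of which Hilbert-space factors get contracted with which in the Weingarten sum — keeping straight the four copies, the $R$, $S$, $A$, $B$ labels, the swaps living only on the $A$'s, the partial-trace identities on the $B$'s, and the maximally entangled states on the $S$'s and (induced) $A$'s and $B$'s. It is easy to misidentify a trace and get the dimension-counting wrong. A clean way to organize it is to draw the tensor-network/permutation diagram for each of the (few) terms and read off the power of each dimension; the $U\ox\bar U$ symmetry kills or trivializes most cross terms, so in the end only a handful survive. Once the leading term $|S|^2/(|A|^2|B|^2)$ is isolated, bounding the remainder by $\cO(1/|A|^2)$ is a routine term-by-term estimate using $|B|\le|S|$ and $|A|\le|B|$; no term should require more than a one-line inequality. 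I would also double-check the leading coefficient is exactly $1$ (not, say, $1+1/(\text{something})$) by noting it must be consistent with $\Tr\sigma = 1$ and with the pure-state case $|S|=|A||B|$, where the purity is exactly $1$.
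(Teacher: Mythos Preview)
Your high-level plan---swap trick, Weingarten average, isolate the leading contribution, bound the remainder under $|A|\le|B|\le|S|$---is the same as the paper's. The genuine gap is your degree count. You assert that after the $U\ox\bar U$ manipulation the integrand becomes ``effectively a second moment'' (two $U$'s and two $\bar U$'s, Weingarten over $S_2$). This is not correct: the purity involves \emph{two} copies of $\proj{\psi}$ with $\ket{\psi}=(V\ox\bar V)\ket{\Ph}^{S_1S_2}$, and each copy carries a $V,\bar V$ on the ket side \emph{and} their adjoints on the bra side. The ricochet trick you have in mind---moving $\bar U$ across the maximally entangled state---does let one rewrite $\ket{\psi}=\sqrt{|R|}\,(Q\ox I)\ket{\Ph}_{AB}$ with $Q=U(\proj{0}^R\ox I^S)U^\dagger$ a Haar-random rank-$|S|$ projection, but it does not lower the degree: $\proj{\psi}$ is quadratic in $Q$, hence $\Tr\sigma^2$ is quartic in $Q$, i.e.\ degree four in $U$ and degree four in $\bar U$. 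The Haar integral is therefore an $n=4$ Weingarten sum over $S_4$ (24 permutations), not $S_2$; the coefficients $\Wg(e)=1/(D^2-1)$ and $\Wg((12))=-1/(D(D^2-1))$ you quote are simply the wrong objects here.

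The paper carries out exactly the $S_4$ computation you are missing: it expands the integrand in matrix entries (four $U$'s and four $\bar U$'s, Eq.~(\ref{eqn:big.sum})), encodes the permutations as bipartite diagrams on $4+4$ vertices, identifies the identity-permutation ``stack'' diagram as the dominant $|S|^2/(|A|^2|B|^2)$ term, bounds each of the six transposition diagrams individually, and controls all $|\pi|\ge 2$ terms uniformly via $\Wg(\pi)=\Theta\big((|A||B|)^{-4-|\pi|}\big)$. It then has to treat separately the ``coincidence'' terms where vertex labels repeat (so the diagram-to-Weingarten correspondence is many-to-one and requires a multiplicity argument). Several of these subleading contributions sit exactly at the $\cO(|S|^2|A|^{-4}|B|^{-2})\subset\cO(1/|A|^2)$ scale and would be invisible in an $S_2$ calculation. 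So the error analysis is considerably more than the ``one-line inequality per term'' you anticipate, and your second-moment formula would yield a wrong answer rather than merely an incomplete one.
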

\begin{figure}
    \centering
    \includegraphics[scale=0.85]{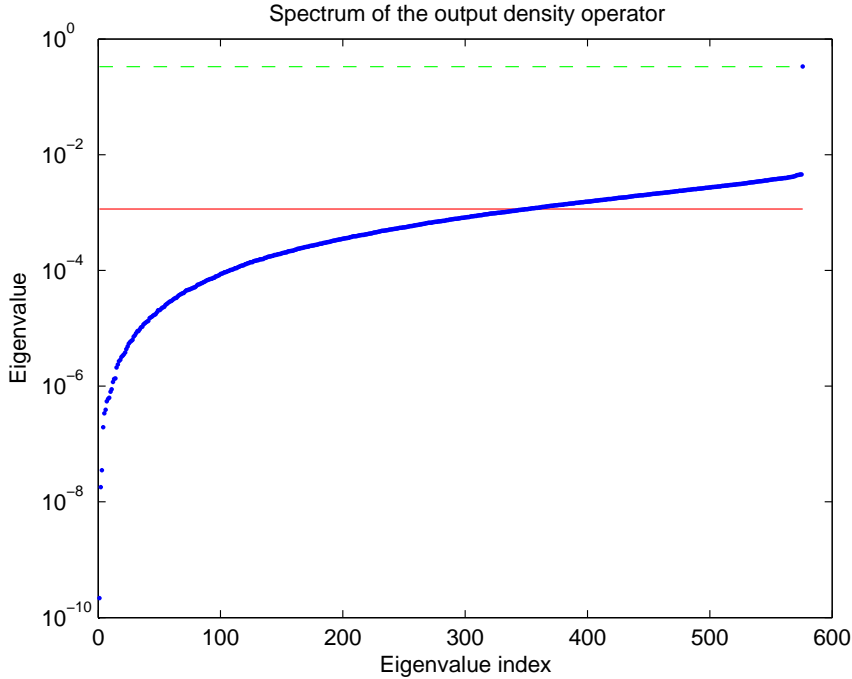}
    \caption{Typical eigenvalue spectrum of $(\cN \ox \bar\cN)(\Ph)$ when $|R|=3$ and
    $|A|=|B|=24$. The eigenvalues are plotted in increasing order
    from left to right. The green dashed line corresponds to
    $|S|/(|A||B|) = 1/3$, which is essentially equal to the largest
    eigenvalue. The red solid line represents the value
    $(1 - \smfrac{|S|}{|A||B|}) / |A|^2 = 1/864$. If the density operator
    were maximally mixed aside from its largest eigenvalue, all but
    that one eigenvalue would fall on this line. While that is not
    the case here or in general, the remaining eigenvalues are nonetheless sufficiently
    small to ensure that the density operator has high von Neumann
    entropy.}
    \label{fig:eigenspectrum}
\end{figure}
A description of the calculation can be found in Appendix
\ref{appendix:purity}. Let the eigenvalues of $(\cN\ox\bar\cN)(\Ph)$
be equal to $\l_1 \geq \l_2 \geq \cdots \geq \l_{|A|^2}$. For a
typical $U$, Lemmas \ref{lem:big.eigenvalue} and
\ref{lem:avg.purity} together imply that
\begin{equation}
\sum_{j > 1} \l_j^2 = \cO\left(\frac{1}{|A|^2}\right).
\end{equation}
Thus, aside from $\l_1$, the eigenvalues $\l_j$ must be quite small.
A typical eigenvalue distribution is plotted in Figure
\ref{fig:eigenspectrum}. If we define $\tilde\l_j = \l_j / (1 -
\l_1)$, then $\sum_{j>1} \tilde\l_j = 1$ and
\begin{equation}
H_1(\tilde\l)
 \geq H_2(\tilde\l)
 = - \ln \sum_{j>1} \tilde\l_j^2
 = 2 \ln |A| - \cO(1).
\end{equation}
An application of the grouping property then gives us a good lower
bound on the von Neumann entropy:
\begin{equation}
H_1\big((\cN\ox\bar\cN)(\Ph)\big)
 = H_1(\l)
 = h( \l_1 ) + (1 - \l_1) H_1 ( \tilde\l )
 = 2 \ln |A| - \cO(1),
\end{equation}
where $h$ is the binary entropy function. This entropy is nearly as
large as it can be and, in particular, as large as $\Hm( \cN ) +
\Hm(\bar\cN)$ according to Theorem IV.1 of \cite{HLW06}, the von
Neumann entropy version of Lemma \ref{lem:size.subspace}.

\section{Discussion} \label{sec:discussion}

The counterexamples presented here demonstrate that the maximal
$p$-norm multiplicativity conjecture and, equivalently, the minimum
output $p$-R\'{e}nyi entropy conjecture are false for all $1 < p
\leq \infty$. The primary motivation for studying this conjecture
was that it is a natural strengthening of the minimum (von Neumann)
output entropy conjecture, which is of fundamental importance in
quantum information theory. In particular, since the
multiplicativity conjecture was formulated, most attempts to prove
the minimum output entropy conjecture for special cases actually
proved maximal $p$-norm multiplicativity and then took the limit as
$p$ decreases to 1. This strategy, we now know, cannot be used to
prove the conjecture in general.

From that perspective, it would seem that the results in this paper
cast doubt on the validity of the minimum output entropy conjecture
itself. However, as we have shown, the examples explored here appear
to be completely consistent with the conjecture, precisely because
the von Neumann entropy is more difficult to perturb than the
R\'{e}nyi entropies of order $p > 1$. It is therefore still possible
that the $p=1$ conjecture could be demonstrated using subtle
variants of $p$-norm multiplicativity such as exact or approximate
multiplicativity in a channel-dependent interval $(1,1+\d)$.

Another strategy that is still open would be to approach the von
Neumann minimum output entropy via R\'{e}nyi entropies for $p<1$. It
is possible that additivity holds there even as it fails for $p>1$.
That is not, unfortunately, a very well-informed speculation. With
few exceptions~\cite{WolfE05}, there has been very little research
on the additivity question in the regime $p < 1$, even though many
arguments can be easily adapted to this parameter region.
(Eq.~(\ref{eqn:king}), for example, holds for all $0 < p$.)
Unfortunately, since the time the examples presented here were first
circulated, counterexamples for $p$ close to $0$ were also
discovered~\cite{p-close-to-0}, casting doubt on the conjecture for
the whole set of R\'{e}nyi entropies with $p<1$. Indeed, as in the
current paper, those examples are based on influencing a single
eigenvalue of the output state of the tensor product channel; while
here we increase the largest one, there the smallest is suppressed.

Thus, while it seems doubtful that the examples of channels presented
here will have direct implications for the addivity of the minimum von Neumann entropy,
we think that they are still very useful
as a new class of test cases. Indeed, as we remarked earlier,
our examples eliminate what had been the previously favoured route to the conjecture via the output $p$-norms.

As a final comment, while this paper has demonstrated that the
maximal $p$-norm additivity conjecture fails for $p>1$, all the
counterexamples presented here have been nonconstructive. For the
examples based on $\epsilon$-randomizing maps, all the known
explicit constructions (by Ambainis and Smith~\cite{AmbainisSmith}
or via iterated quantum expander maps~\cite{q-expanders,Hastings})
only give bounds in the $2$-norm, which do imply bounds on the
output $p$-norm but those are too weak to yield counterexamples to
multiplicativity. Likewise, the counterexamples based on generic
quantum channels rely on the existence of large subspaces containing
only highly entangled states. Even when the entanglement is
quantified using von Neumann entropy, in which case the existence of
these subspaces was demonstrated in 2003~\cite{HLW06}, not a single
explicit construction is known. The culprit, as in many other
related contexts~\cite{PWPVJ08}, is our use of the probabilistic
method. Since we don't have any \emph{explicit} counterexamples,
only a proof that counterexamples exist, it remains an open problem
to ``derandomize'' our argument.

\subsection*{Acknowledgments}
We would like to thank Fr\'ed\'eric Dupuis and Debbie Leung for an
inspiring late-night conversation at the Perimeter Institute, Aram
Harrow for several insightful suggestions, and Mary Beth Ruskai for
discussions on the additivity conjecture. We also thank BIRS for
their hospitality during the Operator Structures in Quantum
Information workshop, which rekindled our interest in the additivity
problem.
PH was supported by the Canada Research Chairs program, a
Sloan Research Fellowship, CIFAR, FQRNT, MITACS, NSERC and QuantumWorks.
AW received support from the U.K.~EPSRC, the Royal Society and the
European Commission (project ``QAP'').

\appendix

\section{Proof of Lemma \ref{lem:avg.purity}} \label{appendix:purity}

We will estimate the integral, in what is perhaps not the most
illuminating way, by expressing it in terms of the matrix entries of
$U$. Let $U_{s,ab} = \,^{R} \bra{0}^{S}\bra{s} U \ket{a}^A
\ket{b}^B$. Expanding gives
\begin{align} \label{eqn:big.sum}
\int \Tr\Big[ & \big( (\cN \ox \bar{\cN})(\proj{\Ph}) \big)^2 \Big] \, dU \\
 &= \frac{1}{|S|^2}
    \sum_{\stackrel{a_1,a_2}{a_1',a_2'}}
    \sum_{\stackrel{b_1,b_2}{b_1',b_2'}}
    \sum_{\stackrel{s_1,s_2}{s_1',s_2'}}
    \int
    \bar U_{s_1,a_2 b_2} \bar U_{s_2,a_1' b_1}
            \bar U_{s_1',a_2' b_2'} \bar U_{s_2',a_1 b_1'}
            U_{s_1,a_1 b_1} U_{s_2,a_2' b_2}
            U_{s_1',a_1' b_1'} U_{s_2',a_2 b_2'} \, dU. \nonumber
\end{align}
Following \cite{AL03,AL04}, the non-zero terms in the sum can be
represented using a simple graphical notation. Make two parallel
columns of four dots, then label the left-hand dots by the indices
$(s_1, s_2, s_1', s_2')$ and the right-hand dots by the indices
$\vec v = ( a_2 b_2, a_1'b_1, a_2'b_2',a_1b_1')$. Join dots with a
solid line if the corresponding $\bar{U}$ matrix entry appears in
Eq.~(\ref{eqn:big.sum}). Since terms integrate to a non-zero value
only if the vector of $U$ indices $\vec w =
(a_1b_1,a_2'b_2,a_1'b_1',a_2b_2')$ is a permutation of the vector of
$\bar U$ indices, a non-zero integral can be represented by using a
dotted line to connect left-hand and right-hand dots whenever the
corresponding $U$ matrix entry appears in the integral.

Assuming for the moment that the vertex labels in the left column
are all distinct and likewise for the right column, the integral
evaluates to the Weingarten function $\Wg(\pi)$, where $\pi$ is the
permutation such that $w_i = v_{\pi(i)}$. For the rough estimate
required here, it is sufficient to know that $\Wg(\pi) = \Theta\big(
(|A||B|)^{-4-|\pi|} \big)$, where $|\pi|$ is the minimal number of
factors required to write $\pi$ as a product of transpositions, and
that $\Wg(e) = (|A||B|)^{-4}\big( 1 +
\cO(|A|^{-2}|B|^{-2})\big)$~\cite{CS06}.

The dominant contribution to Eq.~(\ref{eqn:big.sum}) comes from the
``stack'' diagram
\[
\xy
 (0,0)*{\bullet}="l4";  (0,7)*{\bullet}="l3";
 (0,14)*{\bullet}="l2"; (0,21)*{\bullet}="l1";
 (13,0)*{\bullet}="r4"; (13,7)*{\bullet}="r3";
 (13,14)*{\bullet}="r2";(13,21)*{\bullet}="r1";
 "l1" ; "r1" **\dir{-}; "l2" ; "r2" **\dir{-};
 "l3" ; "r3" **\dir{-}; "l4" ; "r4" **\dir{-};
 (-4,21)*{s_1};  (-4,14)*{s_2};
 (-4,7)*{s_1'}; (-4,0)*{s_2'};
 (25,21)*{a_2 b_2 = a_1 b_1};
 (25,14)*{a_1' b_1 = a_2' b_2};
 (25,7)*{a_2' b_2' = a_1' b_1'};
 (25,0)*{a_1 b_1' = a_2 b_2',};
\endxy
\]
in which the solid and dashed lines are parallel and for which the
contribution is positive and approximately equal to
\begin{equation} \label{eqn:dominant}
 \frac{1}{|S|^2}
    \sum_{\stackrel{a_1,a_2}{a_1',a_2'}}
    \sum_{\stackrel{b_1,b_2}{b_1',b_2'}}
    \sum_{\stackrel{s_1,s_2}{s_1',s_2'}}
    \d_{a_1a_2}\d_{b_1b_2}\d_{a_1'a_2'}\d_{b_1'b_2'}
    \Wg(\id)
 = \frac{|S|^2}{|A|^2|B|^2}\left( 1 + \cO(|A|^{-2}|B|^{-2} ) \right).
\end{equation}
(The expression on the left-hand side would be exact but for the
terms in which vertex labels are not distinct.) To obtain an
estimate of Eq.~(\ref{eqn:big.sum}), it is then sufficient to
examine the other terms and confirm that they are all of smaller
asymptotic order than this. There are six diagrams representing
transpositions, and their associated (negative) contributions are
\[
\xy
 (0,0)*{\xy
     (0,0)*{\bullet}="l4";  (0,7)*{\bullet}="l3";
     (0,14)*{\bullet}="l2"; (0,21)*{\bullet}="l1";
     (13,0)*{\bullet}="r4"; (13,7)*{\bullet}="r3";
     (13,14)*{\bullet}="r2";(13,21)*{\bullet}="r1";
     "l1" ; "r1" **\dir{-}; "l2" ; "r2" **\dir{-};
     "l3" ; "r3" **\dir{-}; "l4" ; "r4" **\dir{-};
     (-4,21)*{s_1};  (-4,14)*{s_2};
     (-4,7)*{s_1'}; (-4,0)*{s_2'};
     (25,21)*{a_2 b_2 = a_1 b_1};
     (25,14)*{a_1' b_1 = a_2' b_2};
     (25,7)*{a_2' b_2' = a_2 b_2'};
     (25,0)*{a_1 b_1' = a_1' b_1'};
     "l3"; "r4" **\dir{--};
     "l4"; "r3" **\dir{--};
     (15,-7)*{\Theta( |S|^2|A|^{-4}|B|^{-2})}
     \endxy};
 (50,0)*{\xy
     (0,0)*{\bullet}="l4";  (0,7)*{\bullet}="l3";
     (0,14)*{\bullet}="l2"; (0,21)*{\bullet}="l1";
     (13,0)*{\bullet}="r4"; (13,7)*{\bullet}="r3";
     (13,14)*{\bullet}="r2";(13,21)*{\bullet}="r1";
     "l1" ; "r1" **\dir{-}; "l2" ; "r2" **\dir{-};
     "l3" ; "r3" **\dir{-}; "l4" ; "r4" **\dir{-};
     (-4,21)*{s_1};  (-4,14)*{s_2};
     (-4,7)*{s_1'}; (-4,0)*{s_2'};
     (25,21)*{a_2 b_2 = a_1 b_1};
     (25,14)*{a_1' b_1 = a_2 b_2' };
     (25,7)*{a_2' b_2' = a_1' b_1'};
     (25,0)*{a_1 b_1' = a_2' b_2};
     "l2"; "r4" **\dir{--};
     "l4"; "r2" **\dir{--};
     (15,-7)*{\Theta( |S|^2|A|^{-4}|B|^{-4})}
     \endxy};
 (100,0)*{\xy
     (0,0)*{\bullet}="l4";  (0,7)*{\bullet}="l3";
     (0,14)*{\bullet}="l2"; (0,21)*{\bullet}="l1";
     (13,0)*{\bullet}="r4"; (13,7)*{\bullet}="r3";
     (13,14)*{\bullet}="r2";(13,21)*{\bullet}="r1";
     "l1" ; "r1" **\dir{-}; "l2" ; "r2" **\dir{-};
     "l3" ; "r3" **\dir{-}; "l4" ; "r4" **\dir{-};
     (-4,21)*{s_1};  (-4,14)*{s_2};
     (-4,7)*{s_1'}; (-4,0)*{s_2'};
     (25,21)*{a_2 b_2 = a_2 b_2'};
     (25,14)*{a_1' b_1 = a_2' b_2};
     (25,7)*{a_2' b_2' = a_1' b_1'};
     (25,0)*{a_1 b_1' = a_1 b_1,};
     "l1"; "r4" **\dir{--};
     "l4"; "r1" **\dir{--};
     (15,-7)*{\Theta( |S|^2|A|^{-2}|B|^{-4})}
     \endxy}
\endxy
\]
\[
\xy
 (0,0)*{\xy
     (0,0)*{\bullet}="l4";  (0,7)*{\bullet}="l3";
     (0,14)*{\bullet}="l2"; (0,21)*{\bullet}="l1";
     (13,0)*{\bullet}="r4"; (13,7)*{\bullet}="r3";
     (13,14)*{\bullet}="r2";(13,21)*{\bullet}="r1";
     "l1" ; "r1" **\dir{-}; "l2" ; "r2" **\dir{-};
     "l3" ; "r3" **\dir{-}; "l4" ; "r4" **\dir{-};
     (-4,21)*{s_1};  (-4,14)*{s_2};
     (-4,7)*{s_1'}; (-4,0)*{s_2'};
     (25,21)*{a_2 b_2 = a_1 b_1};
     (25,14)*{a_1' b_1 = a_1' b_1'};
     (25,7)*{a_2' b_2' = a_2' b_2};
     (25,0)*{a_1 b_1' = a_2 b_2'};
     "l3"; "r2" **\dir{--};
     "l2"; "r3" **\dir{--};
     (15,-7)*{\Theta( |S|^2|A|^{-2}|B|^{-4})}
     \endxy};
 (50,0)*{\xy
     (0,0)*{\bullet}="l4";  (0,7)*{\bullet}="l3";
     (0,14)*{\bullet}="l2"; (0,21)*{\bullet}="l1";
     (13,0)*{\bullet}="r4"; (13,7)*{\bullet}="r3";
     (13,14)*{\bullet}="r2";(13,21)*{\bullet}="r1";
     "l1" ; "r1" **\dir{-}; "l2" ; "r2" **\dir{-};
     "l3" ; "r3" **\dir{-}; "l4" ; "r4" **\dir{-};
     (-4,21)*{s_1};  (-4,14)*{s_2};
     (-4,7)*{s_1'}; (-4,0)*{s_2'};
     (25,21)*{a_2 b_2 = a_1' b_1'};
     (25,14)*{a_1' b_1 = a_2' b_2 };
     (25,7)*{a_2' b_2' = a_1 b_1};
     (25,0)*{a_1 b_1' = a_2 b_2'};
     "l1"; "r3" **\dir{--};
     "l3"; "r1" **\dir{--};
     (15,-7)*{\Theta( |S|^2|A|^{-4}|B|^{-4})}
     \endxy};
 (100,0)*{\xy
     (0,0)*{\bullet}="l4";  (0,7)*{\bullet}="l3";
     (0,14)*{\bullet}="l2"; (0,21)*{\bullet}="l1";
     (13,0)*{\bullet}="r4"; (13,7)*{\bullet}="r3";
     (13,14)*{\bullet}="r2";(13,21)*{\bullet}="r1";
     "l1" ; "r1" **\dir{-}; "l2" ; "r2" **\dir{-};
     "l3" ; "r3" **\dir{-}; "l4" ; "r4" **\dir{-};
     (-4,21)*{s_1};  (-4,14)*{s_2};
     (-4,7)*{s_1'}; (-4,0)*{s_2'};
     (25,21)*{a_2 b_2 = a_2' b_2};
     (25,14)*{a_1' b_1 = a_1 b_1};
     (25,7)*{a_2' b_2' = a_1' b_1'};
     (25,0)*{a_1 b_1' = a_2 b_2'};
     "l1"; "r2" **\dir{--};
     "l2"; "r1" **\dir{--};
     (15,-7)*{\Theta( |S|^2|A|^{-4}|B|^{-2}).}
     \endxy}
\endxy
\]
For permutations $\pi$ such that $|\pi| > 1$, the Weingarten
function is significantly suppressed: $\Wg(\pi) = \cO( |A|^{-6}
|B|^{-6})$. Moreover, for a given diagram type, the requirement that
$w_i = v_{\pi(i)}$ can only hold if at least two pairs of the
indices $a_1,a_2,a_1',a_2',b_1,b_2,b_1',b_2'$ are identical. The
contribution from such diagrams is therefore
$\cO(|S|^2|A|^{-4}|B|^{-2})$.

To finish the proof, it is necessary to consider integrals in which
the vertex labels on the left- or the right-hand side of a diagram
are not all distinct. In this more general case, choosing a set
$\cC$ of representatives for the conjugacy classes of the
permutation group on four elements, the value of the integral can be
written
\begin{equation} \label{eqn:cycle.decomp}
 \sum_{c \in \cC} N(c) \Wg(c),
\end{equation}
where
\begin{equation}
 N(c) = \sum_{\stackrel{\s \in \cS_4:}{\vec v = \s(\vec v)}}
        \sum_{\stackrel{\tau \in \cS_4:}{\vec w = \tau(\vec w)}}
        \d( \tau \pi \s \in c ).
\end{equation}
These formulas have a simple interpretation. Symmetry in the vertex
labels introduces ambiguities in the diagrammatic notation; the
formula states that every one of the diagrams consistent with a
given vertex label set must be counted, and with a defined
dimension-independent multiplicity. Conveniently, our crude
estimates have already done exactly that, ignoring the
multiplicities. The only case for which we need to know the
multiplicities, moreover, is for contributions to the dominant term,
which we want to know exactly and not just up to a constant
multiple.

We claim that in the sum (\ref{eqn:big.sum}) there are at most
$\cO(|S|^4|A||B|^3)$ terms with vertex label symmetry. The total
contribution for terms with vertex label symmetries $\tau$ and $\s$
in which $|\tau \pi \s| \geq 1$ is therefore of size
$\cO(|S|^2|A|^{-4}|B|^{-2})$ and does not affect the dominant term.
To see why the claim holds, fix a diagram type and recall that the
requirement $w_i = v_{\pi(i)}$ for a permutation $\pi$ can only hold
if at least two pairs of the indices
$a_1,a_2,a_1',a_2',b_1,b_2,b_1',b_2'$ are identical. Equality is
achieved only when all the $A$ indices or all the $B$ indices are
aligned, corresponding to the following two diagrams:
\[
\xy
 (0,0)*{\xy
     (0,0)*{\bullet}="l4";  (0,7)*{\bullet}="l3";
     (0,14)*{\bullet}="l2"; (0,21)*{\bullet}="l1";
     (13,0)*{\bullet}="r4"; (13,7)*{\bullet}="r3";
     (13,14)*{\bullet}="r2";(13,21)*{\bullet}="r1";
     "l1" ; "r1" **\dir{-}; "l2" ; "r2" **\dir{-};
     "l3" ; "r3" **\dir{-}; "l4" ; "r4" **\dir{-};
     (-4,21)*{s_1};  (-4,14)*{s_2};
     (-4,7)*{s_1'}; (-4,0)*{s_2'};
     (25,21)*{a_2 b_2 = a_2' b_2};
     (25,14)*{a_1' b_1 = a_1 b_1};
     (25,7)*{a_2' b_2' = a_2 b_2'};
     (25,0)*{a_1 b_1' = a_1' b_1'};
     "l1"; "r2" **\dir{--};
     "l2"; "r1" **\dir{--};
     "l3"; "r4" **\dir{--};
     "l4"; "r3" **\dir{--};
     \endxy};
 (50,0)*{\xy
     (0,0)*{\bullet}="l4";  (0,7)*{\bullet}="l3";
     (0,14)*{\bullet}="l2"; (0,21)*{\bullet}="l1";
     (13,0)*{\bullet}="r4"; (13,7)*{\bullet}="r3";
     (13,14)*{\bullet}="r2";(13,21)*{\bullet}="r1";
     "l1" ; "r1" **\dir{-}; "l2" ; "r2" **\dir{-};
     "l3" ; "r3" **\dir{-}; "l4" ; "r4" **\dir{-};
     (-4,21)*{s_1};  (-4,14)*{s_2};
     (-4,7)*{s_1'}; (-4,0)*{s_2'};
     (25,21)*{a_2 b_2 = a_2 b_2'};
     (25,14)*{a_1' b_1 = a_1' b_1' };
     (25,7)*{a_2' b_2' = a_2' b_2};
     (25,0)*{a_1 b_1' = a_1 b_1};
     "l1"; "r4" **\dir{--};
     "l4"; "r1" **\dir{--};
     "l2"; "r3" **\dir{--};
     "l3"; "r2" **\dir{--};
     \endxy};
\endxy
\]
For the first diagram, using the fact that $|A|\leq|B|\leq|S|$, it
is easy to check that imposing the extra constraint that either the
top or bottom two $S$ or $AB$ vertex labels match singles at most
$\cO(|S|^4|A||B|^3)$ terms from Eq.~(\ref{eqn:big.sum}). Similar
reasoning applies to the second diagram, but imposing the constraint
instead on rows one and four, or two and three. For all other
diagram types, at least four pairs of the indices
$a_1,a_2,a_1',a_2',b_1,b_2,b_1',b_2'$ are identical. (The number of
matching $A$ and $B$ indices is necessarily even.) In a term for
which the vertex labels are not all distinct, either a pair of $S$
indices or a further pair of $A$ or $B$ indices must be identical.
In the latter case, there must exist an identical $A$ pair
\emph{and} an identical $B$ pair among all the pairs. Again using
$|A| \leq |B| \leq |S|$, there can be at most $\cO(|S|^4|B|^3)$ such
terms per diagram type, which demonstrates the claim.

We are thus left to consider integrals with vertex label symmetry
and $N(e) \neq 0$ in Eq.~(\ref{eqn:cycle.decomp}). If $N(e) = 1$,
then our counting was correct and there is no problem. It is
therefore sufficient to bound the number of integrals in which $N(e)
> 1$. This can occur only in terms with at least 2 vertex label
symmetries. Running the argument of the previous paragraph again,
for the two diagrams with $A$ or $B$ indices all aligned, this
occurs in at most $\cO(|S|^4|B|^2)$ terms. For the rest of the
cases, it is necessary to impose equality on yet another pair of
indices, leading again to at most $\cO(|S|^4|B|^2)$ terms. Since
$\Wg(e) = \cO(|A|^{-4}|B|^{-4})$, these contributions are
collectively $\cO(|S|^2|A|^{-4}|B|^{-2})$.

The bound on the error term in Eq.~(\ref{eqn:avg.purity}) arises by
substituting the inequalities $|S| \leq |A||B|$ and $|A| \leq |B|$
into each of the estimates calculated above.

\bibliographystyle{unsrt}
\bibliography{add}

\end{document}